\def\eqref#1{equation~\ref{#1}}
\def\1{\bm{1}}
\DeclareMathAlphabet{\mathsfit}{\encodingdefault}{\sfdefault}{m}{sl}
\SetMathAlphabet{\mathsfit}{bold}{\encodingdefault}{\sfdefault}{bx}{n}
\DeclareMathOperator*{\argmin}{arg\,min}
\definecolor{Gray}{gray}{0.9}
\DeclareMathOperator*{\plim}{plim}
\newtheorem{theorem}{Theorem}%[section]
\newtheorem{corollary}{Corollary}%[theorem]
\title{Learning to solve the credit assignment \mbox{problem}}
\author{%
  Benjamin James Lansdell \\
  Department of Bioengineering\\
  University of Pennsylvania\\
  Pennsylvania, PA 19104 \\
  \texttt{lansdell@seas.upenn.edu} \\
  \And
  Prashanth Ravi Prakash \\
  Department of Bioengineering\\
  University of Pennsylvania\\
  Pennsylvania, PA 19104 \\
  \And
  Konrad Paul Kording \\
  Department of Bioengineering\\
  University of Pennsylvania\\
  Pennsylvania, PA 19104
}
\begin{document}

\maketitle

\begin{abstract}
Backpropagation is driving today's artificial neural networks (ANNs). However, despite extensive research, it remains unclear if the brain implements this algorithm. Among neuroscientists, reinforcement learning (RL) algorithms are often seen as a realistic alternative: neurons can randomly introduce change, and use unspecific feedback signals to observe their effect on the cost and thus approximate their gradient. However, the convergence rate of such learning scales poorly with the number of involved neurons. Here we propose a hybrid learning approach. Each neuron uses an RL-type strategy to learn how to approximate the gradients that backpropagation would provide. We provide proof that our approach converges to the true gradient for certain classes of networks. In both feedforward and convolutional networks, we empirically show that our approach learns to approximate the gradient, and can match or the performance of exact gradient-based learning. Learning feedback weights provides a biologically plausible mechanism of achieving good performance, without the need for precise, pre-specified learning rules.
\end{abstract}

\section{Introduction}

%How to solve credit assignment?
It is unknown how the brain solves the credit assignment problem when learning: how does each neuron know its role in a positive (or negative) outcome, and thus know how to change its activity to perform better next time? This is a challenge for models of learning in the brain.

%Theories related to RL can provide realistic algorithms, but these don't scale
Biologically plausible solutions to credit assignment include those based on reinforcement learning (RL) algorithms and reward-modulated STDP \citep{Bouvier2016, Fiete2007, Fiete, Legenstein2010, Miconi2017}. In these approaches a globally distributed reward signal provides feedback to all neurons in a network. Essentially, changes in rewards from a baseline, or expected, level are correlated with noise in neural activity, allowing a stochastic approximation of the gradient to be computed. However these methods have not been demonstrated to operate at scale. For instance, variance in the REINFORCE estimator \citep{Williams1992} scales with the number of units in the network \citep{Rezende2014}. This drives the hypothesis that learning in the brain must rely on additional structures beyond a global reward signal.

%How it's done in ML, and how this is unrealistic
In artificial neural networks (ANNs), credit assignment is performed with gradient-based methods computed through backpropagation \citep{Rumelhart1986,Werbos1982,Linnainmaa1976}. This is significantly more efficient than RL-based algorithms, with ANNs now matching or surpassing human-level performance in a number of domains \citep{Mnih2015-io,Silver2017-hp,LeCun2015-yo,He2015-oe,Haenssle2018-nj,Russakovsky2015-hw}. However there are well known problems with implementing backpropagation in biologically realistic neural networks. One problem is known as weight transport \citep{GROSSBERG198723}: an exact implementation of backpropagation requires a feedback structure with the same weights as the feedforward network to communicate gradients. Such a symmetric feedback structure has not been observed in biological neural circuits. Despite such issues, backpropagation is the only method known to solve supervised and reinforcement learning problems at scale. Thus modifications or approximations to backpropagation that are more plausible have been the focus of significant recent attention \citep{Scellier2016, Lillicrap2016, Lee2015a, Lansdell2018a, Ororbia2018}. 

%Recent progress suggests some ways forward
These efforts do show some ways forward. Synthetic gradients demonstrate that learning can be based on approximate gradients, and need not be temporally locked \citep{Jaderberg2016,Czarnecki2017}. In small feedforward networks, somewhat surprisingly, fixed random feedback matrices in fact suffice for learning \citep{Lillicrap2016} (a phenomenon known as feedback alignment). But still issues remain: feedback alignment does not work in CNNs, very deep networks, or networks with tight bottleneck layers. Regardless, these results show that rough approximations of a gradient signal can be used to learn; even relatively inefficient methods of approximating the gradient may be good enough.

On this basis, here we propose an RL algorithm to train a feedback system to enable learning. Recent work has explored similar ideas, but not with the explicit goal of approximating backpropagation \citep{Miconi2017,Miconi2018,Song2017}. RL-based methods like REINFORCE may be inefficient when used as a base learner, but they may be sufficient when used to train a system that itself instructs a base learner. We propose to use REINFORCE-style perturbation approach to train feedback signals to approximate what would have been provided by backpropagation.

%Compartmental models
This sort of two-learner system, where one network helps the other learn more efficiently, may in fact align well with cortical neuron physiology. For instance, the dendritic trees of pyramidal neurons consist of an apical and basal component. Such a setup has been shown to support supervised learning in feedforward networks \citep{Guergiuev2017,Kording2001}. Similarly, climbing fibers and Purkinje cells may define a learner/teacher system in the cerebellum \citep{Marr1969}. These components allow for independent integration of two different signals, and may thus provide a realistic solution to the credit assignment problem.

%Introduce learning to learn as the missing ingredient
Thus we implement a network that learns to use feedback signals trained with reinforcement learning via a global reward signal. We mathematically analyze the model, and compare its capabilities to other methods for learning in ANNs. We prove consistency of the estimator in particular cases, extending the theory of synthetic gradient-like approaches \citep{Jaderberg2016,Czarnecki2017,Werbos1992,Schmidhuber1990}. We demonstrate that our model learns as well as regular backpropagation in small models, overcomes the limitations of feedback alignment on more complicated feedforward networks, and can be used in convolutional networks. Thus, by combining local and global feedback signals, this method points to more plausible ways the brain could solve the credit assignment problem.

\section{Learning feedback weights through perturbations}

We use the following notation. Let $\mathbf{x}\in\mathbb{R}^m$ represent an input vector. Let an $N$ hidden-layer network be given by $\hat{\mathbf{y}} = f(\mathbf{x})\in\mathbb{R}^p$. This is composed of a set of layer-wise summation and non-linear activations
$$
\mathbf{h}^i = f^i(\mathbf{h}^{i-1}) = \sigma\left(W^i\mathbf{h}^{i-1}\right),
$$
for hidden layer states $\mathbf{h}^i\in\mathbb{R}^{n_i}$, non-linearity $\sigma$, weight matrices $W^i\in\mathbb{R}^{n_i\times n_{i-1}}$ and denoting $\mathbf{h}^0 = \mathbf{x}$ and $\mathbf{h}^{N+1} = \mathbf{\hat{y}}$. Some loss function $L$ is defined in terms of the network output: $L(\mathbf{y}, \hat{\mathbf{y}})$. Let $\mathcal{L}$ denote the loss as a function of $(\mathbf{x},\mathbf{y})$: $\mathcal{L}(\mathbf{x},\mathbf{y}) = L(\mathbf{y}, f(\mathbf{x}))$. Let data $(\mathbf{x}, \mathbf{y})\in\mathcal{D}$ be drawn from a distribution $\rho$. We aim to minimize:
$
\mathbb{E}_\rho\left[\mathcal{L}(\mathbf{x},\mathbf{y})\right].
$

Backpropagation relies on the error signal $\mathbf{e}^i$, computed in a top-down fashion:
$$
\mathbf{e}^i = \begin{cases} \partial \mathcal{L}/\partial \hat{\mathbf{y}}\circ \sigma'(W^{i}\mathbf{h}^{i-1}), & i = N+1;\\
\left((W^{i+1})^\mathsf{T} \mathbf{e}^{i+1}\right)\circ \sigma'(W^{i}\mathbf{h}^{i-1}), & 1 \le i \le N
\end{cases},
$$
where $\circ$ denotes element-wise multiplication.

\subsection{Basic setup}

Let the loss gradient term be denoted as
$$
\lambda^i = \frac{\partial \mathcal{L}}{\partial \mathbf{h}^i} = (W^{i+1})^\mathsf{T} \mathbf{e}^{i+1}.
$$
In this work we replace $\lambda^i$ with an approximation with its own parameters to be learned (known as a synthetic gradient, or conspiring network, \citep{Jaderberg2016,Czarnecki2017}, or error critic \citep{Werbos1992}):
$$
\lambda^i \approx \mathbf{g}(\mathbf{h}^i, \tilde{\mathbf{e}}^{i+1}; \theta),
$$
for parameters $\theta$. Note that we must distinguish the true loss gradients from their synthetic estimates. Let $\tilde{\mathbf{e}}^i$ be loss gradients computed by backpropagating the synthetic gradients
$$
\tilde{\mathbf{e}}^i = \begin{cases} \partial \mathcal{L}/\partial \hat{\mathbf{y}}\circ \sigma'(W^{i}\mathbf{h}^{i-1}), & i = N+1;\\
\mathbf{g}(\mathbf{h}^i, \tilde{\mathbf{e}}^{i+1}; \theta)\circ \sigma'(W^{i}\mathbf{h}^{i-1}), & 1 \le i \le N
\end{cases}.
$$
For the final layer the synthetic gradient matches the true gradient: $\mathbf{e}^{N+1} = \mathbf{\tilde{e}}^{N+1}$. This setup can accommodate both top-down and bottom-up information, and encompasses a number of published models \citep{Jaderberg2016,Czarnecki2017, Lillicrap2016,Nokland2016,Liao2015,Xiao2018}.

\begin{figure}
\centering
\includegraphics[width=.7\textwidth]{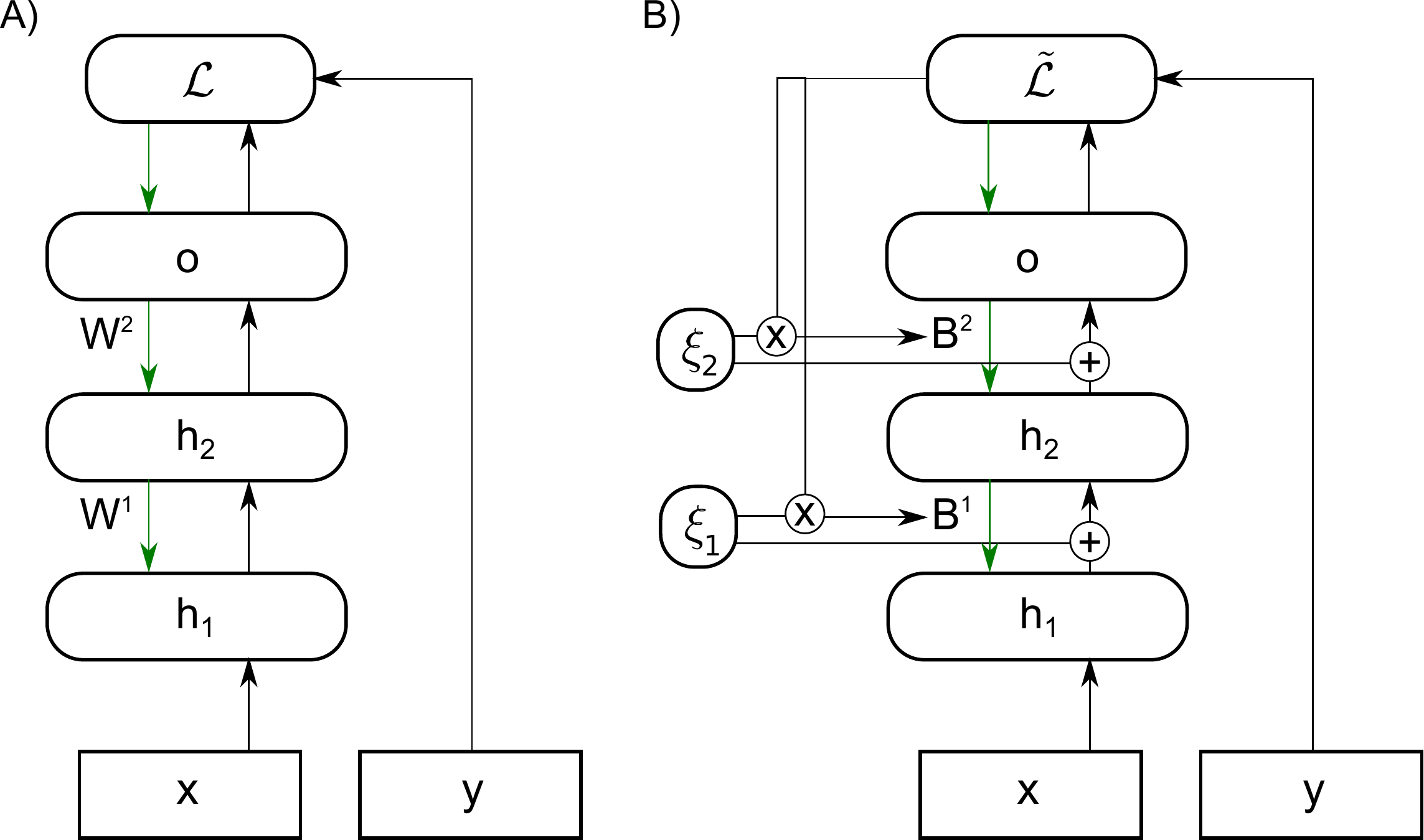}
\caption{Learning feedback weights through perturbations. (A) Backpropagation sends error information from an output loss function, $\mathcal{L}$, through each layer from top to bottom via the same matrices $W^i$ used in the feedforward network. (B) Node perturbation introduces noise in each layer, $\xi_i$, that perturbs that layer's output and resulting loss function. The perturbed loss function, $\tilde{\mathcal{L}}$, is correlated with the noise to give an estimate of the error current. This estimate is used to update feedback matrices $B^i$ to better approximate the error signal.}
\label{fig:schematic}
\vspace{-15pt}
\end{figure}

\subsection{Stochastic networks and gradient descent}

To learn a synthetic gradient we utilze the stochasticity inherent to biological neural networks. A number of biologically plausible learning rules exploit random perturbations in neural activity \citep{Xie2004,Seung2003,Fiete,Fiete2007,Song2017}. Here, at each time each unit produces a noisy response:
$$
\mathbf{h}_t^i = \sigma\left(\sum_k W^i_{\cdot k} \mathbf{h}_t^{i-1}\right) + c_h\xi^i_t,
$$ 
for independent Gaussian noise $\xi^i \sim \nu = \mathcal{N}(0, I)$ and standard deviation $c_h>0$. This generates a noisy loss $\tilde{\mathcal{L}}(\mathbf{x},\mathbf{y},\xi)$ and a baseline loss $\mathcal{L}(\mathbf{x},\mathbf{y}) = \tilde{\mathcal{L}}(\mathbf{x},\mathbf{y},0)$. We will use the noisy response to estimate gradients that then allow us to optimize the baseline $\mathcal{L}$ -- the gradients used for weight updates are computed using the deterministic baseline. 

\subsection{Synthetic gradients via perturbation}

For Gaussian white noise, the well-known REINFORCE algorithm \citep{Williams1992} coincides with the node perturbation method \citep{Fiete,Fiete2007}. Node perturbation works by linearizing the loss:
\begin{equation}
\label{eq:np}
\tilde{\mathcal{L}} \approx \mathcal{L} + \frac{\partial \mathcal{L}}{\partial h_j^i}c_h\xi_j^i,
\end{equation}
such that
$$
\mathbb{E} \left((\tilde{\mathcal{L}}-\mathcal{L}) c_h\xi_j^i|\mathbf{x},\mathbf{y} \right) \approx c_h^2\frac{\partial \mathcal{L}}{\partial h_j^i}\bigg|_{\mathbf{x},\mathbf{y}},
$$
with expectation taken over the noise distribution $\nu(\xi)$. This provides an estimator of the loss gradient
\begin{equation}
\label{eq:est}
\hat{\lambda}^i \coloneqq (\tilde{\mathcal{L}}(\mathbf{x},\mathbf{y},\xi)-\mathcal{L}(\mathbf{x},\mathbf{y})) \frac{\xi^i}{c_h}.
\end{equation}
This approximation is made more precise in Theorem \ref{thm:lse} (Supplementary material). 

\subsection{Training a feedback network}

There are many possible sensible choices of $\mathbf{g}(\cdot)$. For example, taking $\mathbf{g}$ as simply a function of each layer's activations: $\lambda^i = \mathbf{g}(\mathbf{h}^i)$ is in fact sufficient parameterization to express the true gradient function \citep{Jaderberg2016}. We may expect, however, that the gradient estimation problem be simpler if each layer is provided with some error information obtained from the loss function and propagated in a top-down fashion. Symmetric feedback weights may not be biologically plausible, and random fixed weights may only solve certain problems of limited size or complexity \citep{Lillicrap2016}. However, a system that can learn to appropriate feedback weights $B$ may be able to align the feedforward and feedback weights as much as is needed to successfully learn. 

We investigate various choices of $\mathbf{g}(\mathbf{h}^i, \tilde{\mathbf{e}}^{i+1}; B^{i+1})$ outlined in the applications below. Parameters $B^{i+1}$ are estimated by solving the least squares problem:
\begin{equation}
\label{eq:lsq}
\hat{B}^{i+1} = \argmin_{B} \mathbb{E}\left\| \mathbf{g}(\mathbf{h}^i, \tilde{\mathbf{e}}^{i+1}; B) - \hat{\lambda^i} \right\|_2^2.
\end{equation}
Unless otherwise noted this was solved by gradient-descent, updating parameters once with each minibatch. Refer to the supplementary material for additional experimental descriptions and parameters.

\section{Theoretical results}

We can prove the estimator (\ref{eq:lsq}) is consistent as the noise variance $c_h\to 0$, in some particular cases. We state the results informally here, and give the exact details in the supplementary materials. Consider first convergence of the final layer feedback matrix, $B^{N+1}$. 
\begin{theorem} (Informal)
\label{thm:lse}
For $\mathbf{g}_{FA}(\mathbf{h}^i, \tilde{\mathbf{e}}^{i+1}; B^{i+1}) = B^{i+1}\tilde{\mathbf{e}}^{i+1}$, then the least squares estimator
\begin{equation}
\label{eq:lse1}
(\hat{B}^{N+1})^\mathsf{T} \coloneqq \hat{\lambda}^N (\mathbf{e}^{N+1})^\mathsf{T}\left(\mathbf{e}^{N+1}(\mathbf{e}^{N+1})^\mathsf{T}\right)^{-1},
\end{equation}
solves (\ref{eq:lsq}) and converges to the true feedback matrix, in the sense that:
$
\lim_{c_h\to 0}\plim_{T\to\infty} \hat{B}^{N+1} = W^{N+1},
$
where $\plim$ indicates convergence in probability. 
\end{theorem}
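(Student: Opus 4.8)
The plan is to recognize the estimator in (\ref{eq:lse1}) as the solution of a multivariate ordinary least-squares problem and then take the two limits in the stated order: first the sample size $T\to\infty$ via a law of large numbers, then the noise scale $c_h\to 0$ via the node-perturbation linearization in (\ref{eq:np}). Here the target of the regression is the perturbation estimate $\hat{\lambda}^N$ and the regressor is the output error $\mathbf{e}^{N+1}$ (using $\tilde{\mathbf{e}}^{N+1}=\mathbf{e}^{N+1}$ at the final layer), so the whole argument reduces to showing that the population regression coefficient of $\hat{\lambda}^N$ on $\mathbf{e}^{N+1}$ converges to $(W^{N+1})^{\mathsf{T}}$ as the noise vanishes, which is the same as $\hat{B}^{N+1}\to W^{N+1}$.

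First I would establish that the closed form solves (\ref{eq:lsq}) for $\mathbf{g}_{FA}$. Since $\mathbf{g}_{FA}(\mathbf{h}^N,\tilde{\mathbf{e}}^{N+1};B)=B\mathbf{e}^{N+1}$ is linear in $B$, the objective $\mathbb{E}\|B\mathbf{e}^{N+1}-\hat{\lambda}^N\|_2^2$ is a convex quadratic in $B$; setting its derivative to zero gives the normal equations, whose unique solution is the regression coefficient $\mathbb{E}[\hat{\lambda}^N(\mathbf{e}^{N+1})^{\mathsf{T}}]\big(\mathbb{E}[\mathbf{e}^{N+1}(\mathbf{e}^{N+1})^{\mathsf{T}}]\big)^{-1}$, matching the sample expression in (\ref{eq:lse1}) once the Gram matrix $\mathbf{e}^{N+1}(\mathbf{e}^{N+1})^{\mathsf{T}}$ is invertible. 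Next, holding $c_h$ fixed and letting $T\to\infty$, the law of large numbers gives convergence in probability of the empirical moments to their expectations, and the continuous mapping theorem applied to the matrix inverse (valid once the population Gram matrix is nonsingular) yields $\plim_{T\to\infty}(\hat{B}^{N+1})^{\mathsf{T}}=\big(\mathbb{E}[\hat{\lambda}^N(\mathbf{e}^{N+1})^{\mathsf{T}}]\big)\big(\mathbb{E}[\mathbf{e}^{N+1}(\mathbf{e}^{N+1})^{\mathsf{T}}]\big)^{-1}$.

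The core of the argument is the $c_h\to 0$ limit of the cross-moment. Taylor-expanding the noisy loss to first order in the injected noise gives the node-perturbation identity $\tilde{\mathcal{L}}-\mathcal{L}=c_h\sum_{i}(\lambda^i)^{\mathsf{T}}\xi^i+O(c_h^2)$, since the total sensitivity of $\mathcal{L}$ to the perturbation at layer $i$ is precisely $\lambda^i=\partial\mathcal{L}/\partial\mathbf{h}^i$ evaluated at the baseline state. Substituting into $\hat{\lambda}^N=(\tilde{\mathcal{L}}-\mathcal{L})\xi^N/c_h$ and taking the noise expectation, independence across layers together with $\mathbb{E}[\xi^N(\xi^i)^{\mathsf{T}}]=\delta_{iN}I$ kills every cross-layer term and leaves $\mathbb{E}_\nu[\hat{\lambda}^N\mid\mathbf{x},\mathbf{y}]=\lambda^N+O(c_h)=(W^{N+1})^{\mathsf{T}}\mathbf{e}^{N+1}+O(c_h)$. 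Because the regressor $\mathbf{e}^{N+1}$ equals its baseline value, independent of the noise, up to $O(c_h)$ corrections, the cross-moment obeys $\mathbb{E}[\hat{\lambda}^N(\mathbf{e}^{N+1})^{\mathsf{T}}]\to(W^{N+1})^{\mathsf{T}}\,\mathbb{E}[\mathbf{e}^{N+1}(\mathbf{e}^{N+1})^{\mathsf{T}}]$; the Gram factors then cancel in the regression formula, giving $(\hat{B}^{N+1})^{\mathsf{T}}\to(W^{N+1})^{\mathsf{T}}$, i.e.\ $\hat{B}^{N+1}\to W^{N+1}$, which is the claim.

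I expect the \emph{main obstacle} to be rigorously controlling the Taylor remainder through the $1/c_h$ factor in the estimator. The first-order term reproduces $\lambda^N$ exactly in expectation, and Gaussian symmetry helpfully annihilates the next ($O(c_h)$) contribution because odd moments of $\xi$ vanish; nonetheless one must show that the residual $O(c_h^2)$ expectation genuinely tends to zero, which requires smoothness of $\mathcal{L}$ (bounded second and third derivatives along the perturbation directions) together with the finite Gaussian moments of $\xi$, so that dominated convergence justifies passing the limit inside the expectation. A secondary technical point is the identifiability assumption that the error second-moment $\mathbb{E}[\mathbf{e}^{N+1}(\mathbf{e}^{N+1})^{\mathsf{T}}]$ be nonsingular, needed both for uniqueness of the least-squares solution and for the continuous mapping theorem applied to the inverse; and one must respect the stated order of limits, averaging out the sampling noise at fixed $c_h$ before sending $c_h\to 0$.
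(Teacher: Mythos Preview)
Your proposal is correct and follows essentially the same route as the paper: identify (\ref{eq:lse1}) as the OLS solution, pass to population moments via the weak law of large numbers and the continuous mapping theorem, and then control the bias of $\hat{\lambda}^N$ as $c_h\to 0$ using the node-perturbation Taylor expansion. The paper makes explicit the decomposition $\hat{\lambda}^N=(W^{N+1})^{\mathsf T}\mathbf{e}^{N+1}+\mathcal{E}^N(c_h)+\eta^N$ and shows $\mathbb{E}[\mathbf{e}^{N+1}(\eta^N)^{\mathsf T}]=0$ precisely because $\mathbf{e}^{N+1}$ is defined on the \emph{baseline} (noiseless) network and is therefore exactly $(\mathbf{x},\mathbf{y})$-measurable; you should drop the hedging phrase ``up to $O(c_h)$ corrections'' for the regressor, since none are needed.

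The only substantive difference is in how the remainder is handled. You propose bounded low-order derivatives plus dominated convergence, and you exploit the vanishing of odd Gaussian moments to push the bias to $O(c_h^2)$. The paper instead assumes analyticity of $\mathcal{L}$ and subgaussian noise, then bounds the full power-series tail term-by-term using the subgaussian moment growth $\mathbb{E}|\xi|^{m+1}\le K(\sqrt{m+1})^{m+1}$, obtaining $\mathcal{E}^N(c_h)=O(c_h)$. Your Gaussian-specific argument is sharper when the noise is Gaussian; the paper's is more general in the noise distribution but requires analyticity rather than merely $C^3$ smoothness. Either set of hypotheses closes the argument.
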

Theorem 1 thus establishes convergence of $B$ in a shallow (1 hidden layer) non-linear network. In a deep, linear network we can also use Theorem 1 to establish convergence over the rest of the layers.
\begin{theorem} (Informal)
\label{thm:consistent}
For $\mathbf{g}_{FA}(\mathbf{h}^i, \tilde{\mathbf{e}}^{i+1}; B^{i+1}) = B^{i+1}\tilde{\mathbf{e}}^{i+1}$ and $\sigma(x) = x$, the least squares estimator
\begin{equation}
\label{eq:lse2}
(\hat{B}^{i})^\mathsf{T} \coloneqq \hat{\lambda}^{i-1} (\mathbf{\tilde{e}}^{i})^\mathsf{T}\left(\mathbf{\tilde{e}}^{i}(\mathbf{\tilde{e}}^{i})^\mathsf{T}\right)^{-1}\qquad 1 \le i \le N+1, 
\end{equation}
solves (\ref{eq:lsq}) and converges to the true feedback matrix, in the sense that:
$
\lim_{c_h\to 0}\plim_{T\to\infty} \hat{B}^{i} = W^{i}, \qquad 1 \le i \le N+1.
$
\end{theorem}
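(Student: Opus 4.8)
The plan is to prove the statement by downward induction on the layer index $i$, running from $i = N+1$ down to $i = 1$, and to show at each step that the layer-$i$ least-squares problem is structurally identical to the one already solved in Theorem~\ref{thm:lse}. The base case $i = N+1$ is exactly Theorem~\ref{thm:lse}: since the top-layer synthetic gradient equals the true gradient, $\tilde{\mathbf{e}}^{N+1} = \mathbf{e}^{N+1}$, the estimator (\ref{eq:lse2}) coincides with (\ref{eq:lse1}) and converges to $W^{N+1}$. For the inductive step I would assume the claim for all layers above $i$, i.e.\ $\lim_{c_h\to 0}\plim_{T\to\infty}\hat{B}^{j} = W^{j}$ for $j = i+1,\dots,N+1$, and deduce it for layer $i$.

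First I would record the closed form of the least-squares solution. Minimizing $\mathbb{E}\|B\tilde{\mathbf{e}}^i - \hat{\lambda}^{i-1}\|_2^2$ over $B$ is an ordinary linear regression whose normal equations give $(\hat{B}^i)^\mathsf{T} = \hat{\lambda}^{i-1}(\tilde{\mathbf{e}}^i)^\mathsf{T}\big(\tilde{\mathbf{e}}^i(\tilde{\mathbf{e}}^i)^\mathsf{T}\big)^{-1}$ once the outer products are read as empirical averages over the $T$ samples; this is exactly (\ref{eq:lse2}), establishing that it solves (\ref{eq:lsq}). The two limits are then handled separately: as $T\to\infty$ the law of large numbers replaces each empirical second moment by its population expectation (assuming the per-sample quantities are i.i.d.\ with finite second moments and that $\mathbb{E}[\tilde{\mathbf{e}}^i(\tilde{\mathbf{e}}^i)^\mathsf{T}]$ is nonsingular), and as $c_h\to 0$ the node-perturbation estimator becomes exact in the sense of (\ref{eq:np}), so that $\mathbb{E}[\hat{\lambda}^{i-1}\mid\mathbf{x},\mathbf{y}]\to\lambda^{i-1} = (W^i)^\mathsf{T}\mathbf{e}^i$.

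The role of linearity enters in the inductive step. Because $\sigma(x)=x$ gives $\sigma' \equiv 1$, both error recursions collapse to pure matrix products, $\mathbf{e}^i = (W^{i+1})^\mathsf{T}\cdots(W^{N+1})^\mathsf{T}\mathbf{e}^{N+1}$ and $\tilde{\mathbf{e}}^i = B^{i+1}\cdots B^{N+1}\mathbf{e}^{N+1}$, so $\tilde{\mathbf{e}}^i$ is a fixed (estimated) linear image of the top-layer error. By the inductive hypothesis each feedback factor converges to the corresponding true weight in the iterated limit, and since matrix multiplication is continuous the telescoping product converges too, giving $\tilde{\mathbf{e}}^i \to \mathbf{e}^i$. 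Substituting $\tilde{\mathbf{e}}^i \to \mathbf{e}^i$ and $\hat{\lambda}^{i-1}\to (W^i)^\mathsf{T}\mathbf{e}^i$ into the population version of the normal equations yields $(\hat{B}^i)^\mathsf{T} \to (W^i)^\mathsf{T}\,\mathbb{E}[\mathbf{e}^i(\mathbf{e}^i)^\mathsf{T}]\,\big(\mathbb{E}[\mathbf{e}^i(\mathbf{e}^i)^\mathsf{T}]\big)^{-1} = (W^i)^\mathsf{T}$, i.e.\ $\hat{B}^i \to W^i$, closing the induction.

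The main obstacle is the coupling between layers: the regressor $\tilde{\mathbf{e}}^i$ at layer $i$ is not a fixed design but is itself built from the estimated feedback matrices $\hat{B}^{i+1},\dots,\hat{B}^{N+1}$, which are random (sample-dependent) and $c_h$-dependent, so the layer-$i$ regression cannot be analyzed in isolation. Making the argument rigorous requires propagating the nested limits $\lim_{c_h\to 0}\plim_{T\to\infty}$ consistently down the hierarchy---treating the already-converged upper layers as plugged-in consistent estimators and invoking a continuous-mapping/Slutsky-type argument to conclude that the layer-$i$ estimator inherits their convergence---together with verifying that $\mathbb{E}[\mathbf{e}^i(\mathbf{e}^i)^\mathsf{T}]$ stays invertible along the limit so that the normal equations remain well defined. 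Controlling this interaction, rather than any single regression, is where the real work lies.
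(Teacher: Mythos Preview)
Your proposal is correct and follows essentially the same approach as the paper: downward induction with Theorem~\ref{thm:lse} as the base case, using linearity ($\sigma'\equiv 1$) to factor $\tilde{\mathbf{e}}^i$ as a product of estimated feedback matrices applied to $\mathbf{e}^{N+1}$, and then a continuous-mapping/Slutsky argument to pass the $\plim$ through that product. The paper organizes the nested limits by first defining $\tilde{W}^i(c)\coloneqq\plim_{T\to\infty}\hat{B}^i$ and an associated error $\tilde{\tilde{\mathbf{e}}}^i(c)$ backpropagated through these $c$-dependent limits, then sending $c_h\to 0$ at the end---exactly the two-step handling of the layer coupling you flag in your final paragraph.
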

Given these results we can establish consistency for the `direct feedback alignment' (DFA; \cite{Nokland2016}) estimator: $\mathbf{g}_{DFA}(\mathbf{h}^i, \tilde{\mathbf{e}}^{N+1}; B^{i+1}) = (B^{i+1})^\mathsf{T}\tilde{\mathbf{e}}^{N+1}$. Theorem 1 applies trivially since for the final layer, the two approximations have the same form: $\mathbf{g}_{FA}(\mathbf{h}^N, \tilde{\mathbf{e}}^{N+1}; \theta_N) = \mathbf{g}_{DFA}(\mathbf{h}^N, \tilde{\mathbf{e}}^{N+1}; \theta_N)$. Theorem 2 can be easily extended according to the following:
\begin{corollary} (Informal)
For $\mathbf{g}_{DFA}(\mathbf{h}^i, \tilde{\mathbf{e}}^{N+1}; B^{i+1}) = B^{i+1}\tilde{\mathbf{e}}^{N+1}$ and $\sigma(x) = x$, the least squares estimator
\begin{equation}
\label{eq:lse2C}
(\hat{B}^{i})^\mathsf{T} \coloneqq \hat{\lambda}^{i-1} (\mathbf{\tilde{e}}^{N+1})^\mathsf{T}\left(\mathbf{\tilde{e}}^{N+1}(\mathbf{\tilde{e}}^{N+1})^\mathsf{T}\right)^{-1}\qquad 1 \le n \le N+1, 
\end{equation}
solves (\ref{eq:lsq}) and converges to the true feedback matrix, in the sense that:
$
\lim_{c_h\to 0}\plim_{T\to\infty} \hat{B}^{i} = \prod_{j=N+1}^{i} W^{j}, \qquad 1 \le i \le N+1.
$
\end{corollary}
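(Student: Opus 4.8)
The plan is to exploit the structural feature that distinguishes DFA from the feedback-alignment setting of Theorem~2: the regressor fed to every layer is the top-layer error $\tilde{\mathbf{e}}^{N+1}$, which by the convention $\mathbf{e}^{N+1}=\tilde{\mathbf{e}}^{N+1}$ equals the exact gradient $\partial\mathcal{L}/\partial\hat{\mathbf{y}}$ and therefore does \emph{not} depend on any of the learned matrices $B^1,\dots,B^{N+1}$. Consequently the $N+1$ least-squares problems in (\ref{eq:lsq}) decouple completely, and no layer-by-layer induction is needed (unlike Theorem~2, where $\tilde{\mathbf{e}}^i$ depends on the matrices above it). I would organize the argument as: (i) verify that the closed form (\ref{eq:lse2C}) is the minimizer of (\ref{eq:lsq}) for $\mathbf{g}_{DFA}$; (ii) send $T\to\infty$ to pass to population moments; (iii) send $c_h\to 0$ using the consistency of the node-perturbation estimator established for Theorems~1 and~2; and (iv) identify the resulting population coefficient with the weight product.

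For step (i), fixing a layer $i$ and writing $B$ for the matrix in $\mathbf{g}_{DFA}=B\tilde{\mathbf{e}}^{N+1}$, the objective $\mathbb{E}\|B\tilde{\mathbf{e}}^{N+1}-\hat{\lambda}^{i-1}\|_2^2$ is a convex quadratic in $B$, so setting its gradient to zero yields the normal equations $B\,\mathbb{E}[\tilde{\mathbf{e}}^{N+1}(\tilde{\mathbf{e}}^{N+1})^\mathsf{T}]=\mathbb{E}[\hat{\lambda}^{i-1}(\tilde{\mathbf{e}}^{N+1})^\mathsf{T}]$. Reading the outer products in (\ref{eq:lse2C}) as empirical second moments over the $T$ samples, the stated $(\hat{B}^i)^\mathsf{T}$ is exactly the solution of these equations whenever the empirical covariance of $\tilde{\mathbf{e}}^{N+1}$ is invertible; this is the same computation as in Theorems~1 and~2. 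Step (ii) is then the law of large numbers: the empirical moments converge in probability to $\mathbb{E}[\hat{\lambda}^{i-1}(\mathbf{e}^{N+1})^\mathsf{T}]$ and $\mathbb{E}[\mathbf{e}^{N+1}(\mathbf{e}^{N+1})^\mathsf{T}]$, so $\plim_{T\to\infty}(\hat{B}^i)^\mathsf{T}$ equals the population regression coefficient of $\hat{\lambda}^{i-1}$ on $\mathbf{e}^{N+1}$.

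For steps (iii)--(iv), the node-perturbation estimator is consistent for the true loss gradient as the noise vanishes, $\lim_{c_h\to0}\hat{\lambda}^{i-1}=\lambda^{i-1}$ in the sense used in Theorem~1, so the population coefficient converges to $\mathbb{E}[\lambda^{i-1}(\mathbf{e}^{N+1})^\mathsf{T}]\big(\mathbb{E}[\mathbf{e}^{N+1}(\mathbf{e}^{N+1})^\mathsf{T}]\big)^{-1}$. In the linear network $\sigma(x)=x$ the chain rule gives the exact identity $\lambda^{i-1}=(W^i)^\mathsf{T}(W^{i+1})^\mathsf{T}\cdots(W^{N+1})^\mathsf{T}\mathbf{e}^{N+1}$, i.e.\ $\lambda^{i-1}$ is a fixed linear image $M_i\,\mathbf{e}^{N+1}$ of the regressor with $M_i=(W^i)^\mathsf{T}\cdots(W^{N+1})^\mathsf{T}$. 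Substituting this into the population coefficient makes the covariance factors cancel, leaving the optimal matrix $M_i$; transposing gives $\hat{B}^i\to (M_i)^\mathsf{T}=W^{N+1}W^N\cdots W^i=\prod_{j=N+1}^i W^j$, which is the claim.

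The main obstacle is not the algebra but justifying the two limiting steps. First, invertibility of $\mathbb{E}[\mathbf{e}^{N+1}(\mathbf{e}^{N+1})^\mathsf{T}]$ must be assumed exactly as in Theorems~1 and~2, and the iterated limit $\lim_{c_h\to0}\plim_{T\to\infty}$ should be taken in that order so that the covariance being inverted stays nondegenerate. Second, and most delicate, is the node-perturbation consistency $\hat{\lambda}^{i-1}\to\lambda^{i-1}$ for an interior layer of a \emph{deep} network: because noise is injected at every layer and propagates forward, the linearization (\ref{eq:np}) carries an $O(c_h)$ bias that must be shown to vanish uniformly across layers as $c_h\to0$. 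This is precisely the content borrowed from the proof of Theorem~1 and its deep-linear extension in Theorem~2; once that consistency is granted, the decoupling described above makes the passage to DFA immediate, with the product structure arising solely from the linear chain rule rather than from any recursive estimation.
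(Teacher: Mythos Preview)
Your proposal is correct and follows essentially the same route as the paper's proof. The paper writes $\hat{\lambda}^i = (W^{i+1}\cdots W^{N+1})^\mathsf{T}\mathbf{e}^{N+1} + \mathcal{E}^i(c_h) + \eta^i$ directly, applies the weak law of large numbers and the continuous mapping theorem exactly as in Theorem~1, and lets the $\mathcal{O}(c_h)$ remainder vanish---which is precisely your steps (ii)--(iv). Your explicit observation that the regressor $\tilde{\mathbf{e}}^{N+1}=\mathbf{e}^{N+1}$ is independent of all learned $B$'s, so the layers decouple and no induction is needed, is the conceptual reason the argument reduces to Theorem~1 rather than Theorem~2; the paper uses this implicitly but does not articulate it.
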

Thus for a non-linear shallow network or a deep linear network, for both $g_{FA}$ and $g_{DFA}$, we have the result that, for sufficiently small $c_h$, if we fix the network weights $W$ and train $B$ through node perturbation then we converge to $W$. Validation that the method learns to approximate $W$, for fixed $W$, is provided in the supplementary material. In practice, we update $B$ and $W$ simultaneously. Some convergence theory is established for this case in \citep{Jaderberg2016,Czarnecki2017}.

\section{Applications}

\subsection{Fully connected networks solving MNIST}

\begin{figure*}[t!]
\centering
\includegraphics[width=1\textwidth]{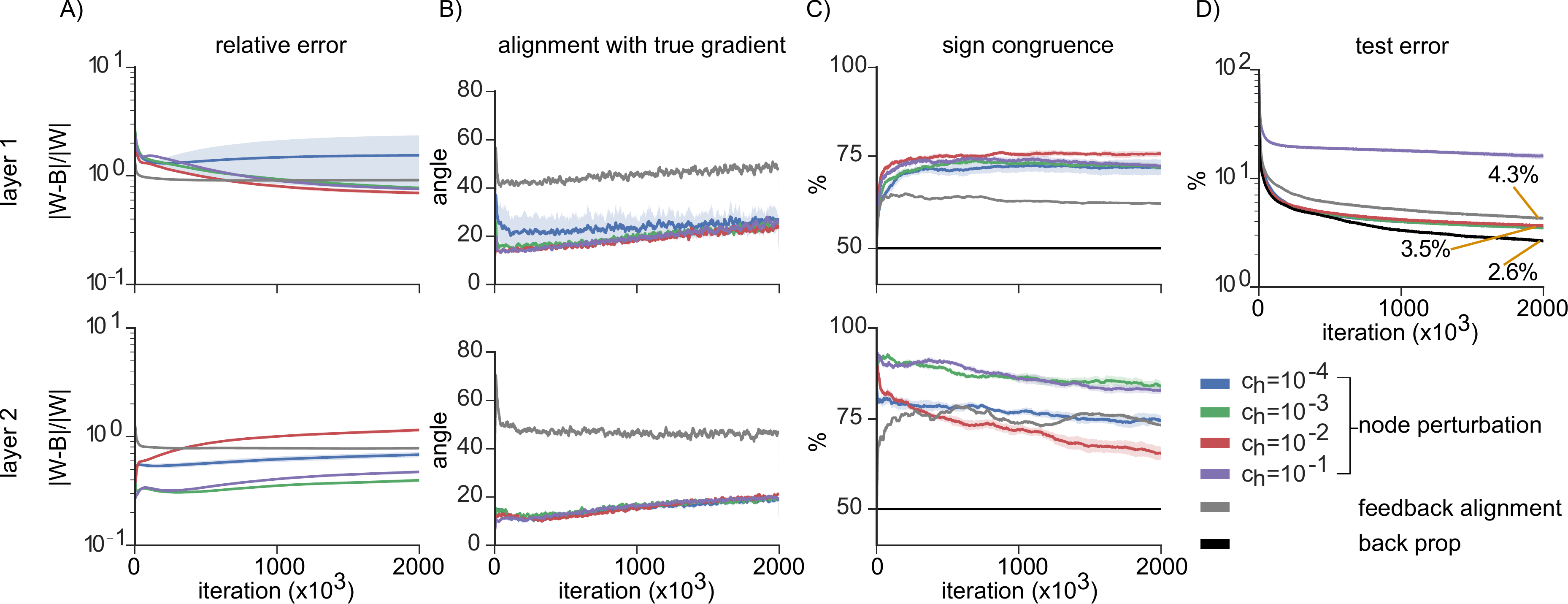}
\caption{Node perturbation in small 4-layer network (784-50-20-10 neurons), for varying noise levels $c$, compared to feedback alignment and backpropagation. (A) Relative error between feedforward and feedback matrix. (B) Angle between true gradient and synthetic gradient estimate for each layer. (C) Percentage of signs in $W^i$ and $B^i$ that are in agreement. (D) Test error for node perturbation, backpropagation and feedback alignment. Curves show mean plus/minus standard error over 5 runs.}
\label{fig:ffmnist}
\end{figure*}

First we investigate $\mathbf{g}(\mathbf{h}^i,\mathbf{ \tilde{e}}^{i+1}; B^{i+1}) = (B^{i+1})^\mathsf{T}\mathbf{\tilde{e}}^{i+1}$, which describes a non-symmetric feedback network (Figure \ref{fig:schematic}). To demonstrate the method can be used to solve simple supervised learning problems we use node perturbation with a four-layer network and MSE loss to solve MNIST (Figure \ref{fig:ffmnist}). Updates to $W^i$ are made using the synthetic gradients
$
\Delta W^i = \eta \tilde{\mathbf{e}}^i\mathbf{h}^{i-1},
$
for learning rate $\eta$. The feedback network needs to co-adapt with the feedforward network in order to continue to provide a useful error signal. We observed that the system is able to adjust to provide a close correspondence between the feedforward and feedback matrices in both layers of the network (Figure \ref{fig:ffmnist}A). The relative error between $B^i$ and $W^i$ is lower than what is observed for feedback alignment, suggesting that this co-adaptation of both $W^i$ and $B^i$ is indeed beneficial. The relative error depends on the amount of noise used in node perturbation -- lower variance doesn't necessarily imply the lowest error between $W$ and $B$, suggesting there is an optimal noise level that balances bias in the estimate and the ability to co-adapt to the changing feedforward weights.\footnote{Code to reproduce these results can be found at: \url{https://github.com/benlansdell/synthfeedback}}

Consistent with the low relative error in both layers, we observe that the alignment (the angle between the estimated gradient and the true gradient -- proportional to $\mathbf{e}^\mathsf{T}WB^\mathsf{T}\tilde{\mathbf{e}}$) is low in each layer -- much lower for node perturbation than for feedback alignment, again suggesting that the method is much better at communicating error signals between layers (Figure \ref{fig:ffmnist}B). In fact, recent studies have shown that sign congruence of the feedforward and feedback matrices is all that is required to achieve good performance \citep{Liao2015,Xiao2018}. Here the sign congruence is also higher in node perturbation, again depending somewhat the variance. The amount of congruence is comparable between layers (Figure \ref{fig:ffmnist}C). Finally, the learning performance of node perturbation is comparable to backpropagation (Figure \ref{fig:ffmnist}D), and better than feedback alignment in this case, though not by much. Note that by setting the feedback learning rate to zero, we recover the feedback alignment algorithm. So we should expect to be always able to do at least as well as feedback alignment. These results instead highlight the qualitative differences between the methods, and suggest that node perturbation for learning feedback weights can be used to approximate gradients in deep networks.

\begin{figure}[t!]
\centering
\includegraphics[width=\textwidth]{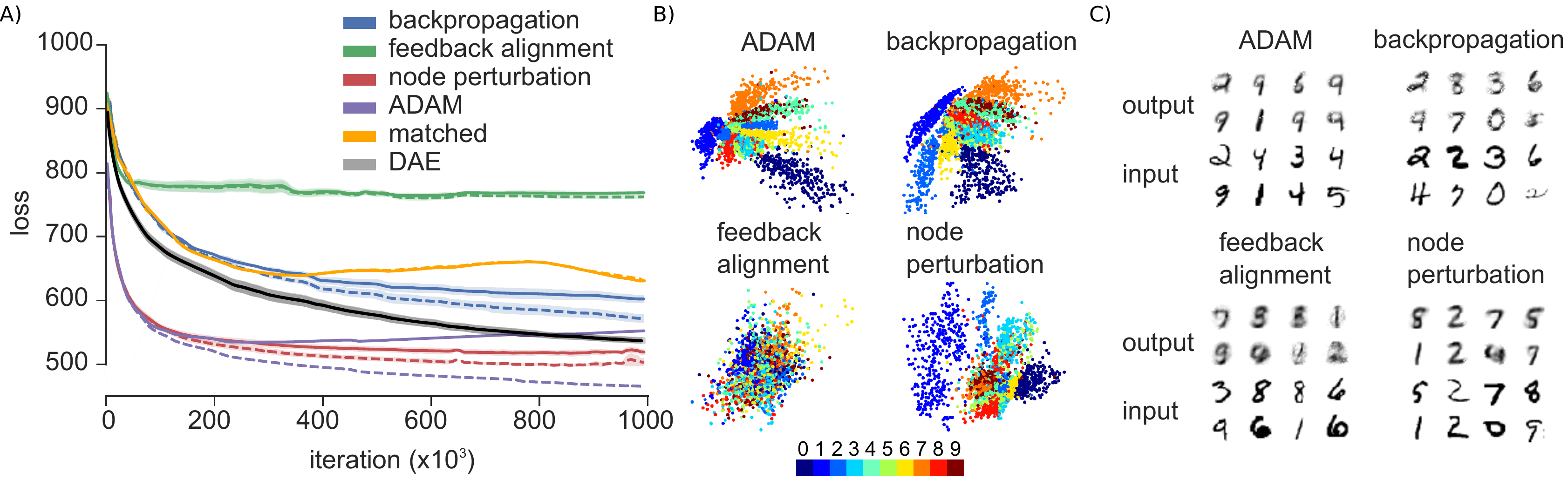}
\caption{Results with five-layer MNIST autoencoder network. (A) Mean loss plus/minus standard error over 10 runs. Dashed lines represent training loss, solid lines represent test loss. (B) Latent space activations, colored by input label for each method. (C) Sample outputs for each method.}
\label{fig:aemnist}
\end{figure}

\subsection{Auto-encoding MNIST}

The above results demonstrate node perturbation provides error signals closely aligned with the true gradients. However, performance-wise they do not demonstrate any clear advantage over feedback alignment or backpropagation. A known shortcoming of feedback alignment is in very deep networks and in autoencoding networks with tight bottleneck layers \citep{Lillicrap2016}. To see if node perturbation has the same shortcoming, we test performance of a $\mathbf{g}(\mathbf{h}^i, \mathbf{\tilde{e}}^{i+1}; B^{i+1}) = (B^{i+1})^\mathsf{T}\mathbf{\tilde{e}}^{i+1}$ model on a simple auto-encoding network with MNIST input data (size 784-200-2-200-784). In this more challenging case we also compare the method to the `matching' learning rule \citep{Rombouts2015,Martinolli2018}, in which updates to $B$ match updates to $W$ and weight decay is added, a denoising autoencoder (DAE) \citep{Vincent2008}, and the ADAM \citep{Kingma2014} optimizer (with backprop gradients). 

As expected, feedback alignment performs poorly, while node perturbation performs better than backpropagation (Figure \ref{fig:aemnist}A). The increased performance relative to backpropagation may seem surprising. A possible reason is the addition of noise in our method encourages learning of more robust latent factors \citep{Alain2015}. The DAE also improves the loss over vanilla backpropagation (Figure \ref{fig:aemnist}A). And, in line with these ideas, the latent space learnt by node perturbation shows a more uniform separation between the digits, compared to the networks trained by backpropagation. Feedback alignment, in contrast, does not learn to separate digits in the bottleneck layer at all (Figure \ref{fig:aemnist}B), resulting in scrambled output (Figure \ref{fig:aemnist}C). The matched learning rule performs similarly to backpropagation. These possible explanations are investigated more below. Regardless, these results show that node perturbation is able to successfully communicate error signals through thin layers of a network as needed.

\subsection{Convolutional neural networks solving CIFAR}

Convolutional networks are another known shortcoming of feedback alignment. Here we test the method on a convolutional neural network (CNN) solving CIFAR \citep{Krizhevsky2009}. Refer to the supplementary material for architecture and parameter details. For this network we learn feedback weights direct from the output layer to each earlier layer: $\mathbf{g}(\mathbf{h}^i, \tilde{\mathbf{e}}^{i+1}; B^{i+1}) = (B^{i+1})^\mathsf{T}\tilde{\mathbf{e}}^{N+1}$ (similar to `direct feedback alignment' \citep{Nokland2016}). Here this was solved by gradient-descent. On CIFAR10 we obtain a test accuracy of 75\%. When compared with fixed feedback weights and backpropagation, we see it is advantageous to learn feedback weights on CIFAR10 and marginally advantageous on CIFAR100 (Table 1). This shows the method can be used in a CNN, and can solve challenging computer vision problems without weight transport.

\begin{table}[h!]
\caption{Mean test accuracy of CNN over 5 runs trained with backpropagation, node perturbation and direct feedback alignment (DFA) \citep{Nokland2016, Crafton2019}.}\vspace{12pt}
\centering
\begin{tabular}{|c c c c|} 
 \hline
 dataset & backpropagation & node perturbation & DFA \\ [0.5ex] 
 \hline
 CIFAR10 & 76.9$\pm$0.1 & 74.8$\pm$0.2 & 72.4$\pm$0.2  \\
 CIFAR100 & 51.2$\pm$0.1 & 48.1$\pm$0.2 & 47.3$\pm$0.1  \\
 \hline
\end{tabular}
\label{table:1}
\end{table}

\subsection{What is helping, noisy activations or approximating the gradient?}

To solve the credit assignment problem, our method utilizes two well-explored strategies in deep learning: adding noise (generally used to regularize \citep{Bengio2013,Gulcehre2016,Neelakantan2015,Bishop1995}), and approximating the true gradients \citep{Jaderberg2016}. To determine which of these features are responsible for the improvement in performance over fixed weights, in the autoencoding and CIFAR10 cases, we study the performance while varying where noise is added to the models (Table 2). Noise can be added to the activations (BP and FA w. noise, Table 2), or to the inputs, as in a denoising autoencoder (DAE, Table 2). Or, noise can be used only in obtaining an estimator of the true gradients (as in our method; NP, Table 2). For comparison, a noiseless version of our method must instead assume access to the true gradients, and use this to learn feedback weights (i.e. synthetic gradients \citep{Jaderberg2016}; SG, Table 2). Each of these models is tested on the autoencoding and CIFAR10 tasks, allowing us to better understand the performance of the node perturbation method.

\begin{table}[h!]
\caption{Mean loss (plus/minus standard error) on autoencoding MNIST task (left) and mean accuracy on CIFAR10 task (right). Shaded cells indicate methods which do not use weight transport or exact gradient supervision. Best performance indicated in \textbf{boldface}. Implementation details of each method is provided in the supplementary material. }%\vspace{12pt}
\centering
 \subfloat[][MNIST autoencoder]{
\begin{tabular}{|c c c|} 
 \hline
 method & noise & no noise \\ [0.5ex] 
 \hline
 BP(SGD) & 536.8$\pm$2.1 & 609.8$\pm$14.4 \\
 BP(ADAM) & 522.3$\pm$0.4 & 533.3$\pm$2.2 \\
 FA & 768.2$\pm$2.7 \cellcolor{Gray} &  759.1$\pm$3.3 \cellcolor{Gray} \\
 DAE & 539.8$\pm$4.9 & -- \\
 NP (ours) & \textbf{515.3$\pm$4.1} \cellcolor{Gray} & -- \\
 SG & -- & 521.6$\pm$2.3 \\
 Matched & 629.9$\pm$1.1 \cellcolor{Gray} & 615.0$\pm$0.4 \cellcolor{Gray}\\
 \hline
\end{tabular}}
 \subfloat[][CIFAR10 classification]{
\begin{tabular}{|c c c|} 
 \hline
 method & noise & no noise \\ [0.5ex] 
 \hline
 BP & 76.8$\pm$0.2 & \textbf{76.9$\pm$0.1} \\
 DFA & 72.4$\pm$0.2 \cellcolor{Gray} & 72.3$\pm$0.1 \cellcolor{Gray} \\
 NP (ours) & 74.8$\pm$0.2 \cellcolor{Gray} & -- \\
 SG & -- & 75.3$\pm$0.3\\
 \hline
\end{tabular}}
\label{table:2}
\end{table}
In the autoencoding task, both noise (either in the inputs or the activations) and using an approximator to the gradient improve performance (Table 2, left). Noise benefits performance for both SGD optimization and ADAM \citep{Kingma2014}. In fact in this task, the combination of both of these factors (i.e. our method) results in better performance over either alone. Yet, the addition of noise to the activations does not help feedback alignment. This suggests that our method is indeed learning useful approximations of the error signals, and is not merely improving due to the addition of noise to the system. In the CIFAR10 task (Table 2, right), the addition of noise to the activations has minimal effect on performance, while having access to the true gradients (SG) does result in improved performance over fixed feedback weights. Thus in these tasks it appears that noise does not always help, but using a less-based gradient estimator does, and noisy activations are one way of obtaining an unbiased gradient estimator. Our method also is the best performing method that does not require either weight transport or access to the true gradients as a supervisory signal.

\section{Discussion}

Here we implement a perturbation-based synthetic gradient method to train neural networks. We show that this hybrid approach can be used in both fully connected and convolutional networks. By removing the symmetric feedforward/feedback weight requirement imposed by backpropagation, this approach is a step towards more biologically-plausible deep learning. By reaching comparable performance to backpropagation on MNIST, the method is able to solve larger problems than perturbation-only methods \citep{Xie2004,Fiete2007,Werfel2005}. By working in cases that feedback alignment fails, the method can provide learning without weight transport in a more diverse set of network architectures. We thus believe the idea of integrating both local and global feedback signals is a promising direction towards biologically plausible learning algorithms.

Of course, the method does not solve all issues with implementing gradient-based learning in a biologically plausible manner. For instance, in the current implementation, the forward and the backwards passes are locked. Here we just focus on the weight transport problem. A current drawback is that the method does not reach state-of-the-art performance on more challenging datasets like CIFAR. We focused on demonstrating that it is advantageous to learn feedback weights, when compared with fixed weights, and successfully did so in a number of cases. However, we did not use any additional data augmentation and regularization methods often employed to reach state-of-the-art performance. Thus fully characterizing the performance of this method remains important future work. The method also does not tackle the temporal credit assignment problem, which has also seen recent progress in biologically plausible implementation \cite{Ororbia2019,Ororbia2018b}. 

However the method does has a number of computational advantages. First, without weight transport the method has better data-movement performance \citep{Crafton2019,Wilson2019}, meaning it may be more efficiently implemented than backpropagation on specialized hardware. Second, by relying on random perturbations to measure gradients, the method does not rely on the environment to provide gradients (compared with e.g. \cite{Czarnecki2017a,Jaderberg2016}). Our theoretical results are somewhat similar to that of \cite{Alain2015}, who demonstrate that a denoising autoencoder converges to the unperturbed solution as Gaussian noise goes to zero. However our results apply to subgaussian noise more generally.

While previous research has provided some insight and theory for how feedback alignment works \citep{Lillicrap2016,Ororbia2018,Moskovitz2018,Bartunov2018,Baldi2018} the effect remains somewhat mysterious, and not applicable in some network architectures. Recent studies have shown that some of these weaknesses can be addressed by instead imposing sign congruent feedforward and feedback matrices \citep{Xiao2018}. Yet what mechanism may produce congruence in biological networks is unknown. Here we show that the shortcomings of feedback alignment can be addressed in another way: the system can learn to adjust weights as needed to provide a useful error signal. Our work is closely related to \cite{Wilson2019}, which also uses perturbations to learn feedback weights. However our approach does not divide learning into two phases, and training of the feedback weights does not occur in a layer-wise fashion, assuming only one layer is noisy at a time, which is a strong assumption. Here instead we focus on combining global and local learning signals. 

Here we tested our method in an idealized setting. However the method is consistent with neurobiology in two important ways. First, it involves separate learning of feedforward and feedback weights. This is possible in cortical networks, where complex feedback connections exist between layers \citep{Lacefield2019,Richards2019} and pyramidal cells have apical and basal compartments that allow for separate integration of feedback and feedforward signals \citep{Guerguiev2017-lp,kording2001supervised}. A recent finding that apical dendrites receive reward information is particularly interesting \citep{Lacefield2019}. Models like \cite{Guerguiev2017-lp} show how the ideas in this paper may be implemented in spiking neural networks. We believe such models can be augmented with a perturbation-based rule like ours to provide a better learning system. 

The second feature is that perturbations are used to learn the feedback weights. How can a neuron measure these perturbations? There are many plausible mechanisms \citep{Seung2003,Xie2004,Fiete,Fiete2007}. For instance, birdsong learning uses ‘empiric synapses’ from area LMAN \citep{Fiete2007}, others proposed it is approximated \citep{Legenstein2010,Hoerzer2014}, or neurons could use a learning rule that does not require knowing the noise \citep{Lansdell2018a}. Further, our model involves the subtraction of a baseline loss to reduce the variance of the estimator. This does not affect the expected value of the estimator -- technically the baseline could be removed or replaced with an approximation \citep{Legenstein2010,Loewenstein2006}. Thus both separation of feedforward and feedback systems and perturbation-based estimators can be implemented by neurons. 

As RL-based methods do not scale by themselves, and exact gradient signals are infeasible, the brain may well use a feedback system trained through reinforcement signals to usefully approximate gradients. There is a large space of plausible learning rules that can learn to use feedback signals in order to more efficiently learn, and these promise to inform both models of learning in the brain and learning algorithms in artificial networks. Here we take an early step in this direction.

%%%%%%%%%%%%%%%%%%%%%%%%%%%%%%%%%%%%%%%%%%%%%%%%%%%%%%%%%%%

\bibliographystyle{iclr2020_conference}
\bibliography{Writeups-RDD.bib,references.bib,extra_iclrpapers.bib}

\appendix
\onecolumn
\section{Proofs}
\setcounter{theorem}{0}
\setcounter{corollary}{0}
We review the key components of the model. Data $(\mathbf{x}, \mathbf{y})\in\mathcal{D}$ are drawn from a distribution $\rho$. The loss function is linearized:
\begin{equation}
\label{eq:npA}
\tilde{\mathcal{L}} \approx \mathcal{L} + \frac{\partial \mathcal{L}}{\partial h_j^i}c_h\xi_j^i,
\end{equation}
such that
$$
\mathbb{E} \left((\tilde{\mathcal{L}}-\mathcal{L}) c_h\xi_j^i|\mathbf{x},\mathbf{y} \right) \approx c_h^2\frac{\partial \mathcal{L}}{\partial h_j^i}\bigg|_{\mathbf{x},\mathbf{y}},
$$
with expectation taken over the noise distribution $\nu(\xi)$. This suggests a good estimator of the loss gradient is
\begin{equation}
\label{eq:estA}
\hat{\lambda}^i \coloneqq (\tilde{\mathcal{L}}(\mathbf{x},\mathbf{y},\xi)-\mathcal{L}(\mathbf{x},\mathbf{y})) \frac{\xi^i}{c_h}.
\end{equation}
Let $\tilde{\mathbf{e}}^i$ be the error signal computed by backpropagating the synthetic gradients:
$$
\tilde{\mathbf{e}}^i = \begin{cases} \partial \mathcal{L}/\partial \hat{\mathbf{y}}\circ \sigma'(W^{i}\mathbf{h}^{i-1}), & i = N+1;\\
\left((\hat{B}^{i+1})^\mathsf{T} \tilde{\mathbf{e}}^{i+1}\right)\circ \sigma'(W^{i}\mathbf{h}^{i-1}), & 1 \le i \le N.
\end{cases}
$$
Then parameters $B^{i+1}$ are estimated by solving the least squares problem:
\begin{equation}
\label{eq:lsqA}
\hat{B}^{i+1} = \argmin_{B} \mathbb{E}\left\| B^\mathsf{T}\mathbf{\tilde{e}}^{i+1} - \hat{\lambda^i} \right\|_2^2.
\end{equation}

Note that the matrix-vector form of backpropagation given here is setup so that we can think of each term as either a vector for a single input, or as matrices corresponding to a set of $T$ inputs. Here we focus on the question, under what conditions can we show that $\hat{B}^{i+1}\to W^{i+1}$, as $T\to \infty$? 

One way to find an answer is to define the synthetic gradient in terms of the system without noise added. Then $B^\mathsf{T}\tilde{\mathbf{e}}$ is deterministic with respect to $\mathbf{x},\mathbf{y}$ and, assuming $\tilde{\mathcal{L}}$ has a convergent power series around $\xi = 0$, we can write
\begin{align*}
\mathbb{E}(\hat{\lambda^i}|\mathbf{x}, \mathbf{y}) &= \mathbb{E}\left(\frac{1}{c_h^2}\left[\frac{\partial \mathcal{L}}{\partial h^i}(c_h\xi_j^i)^2+\sum_{m=2}^\infty\frac{\mathcal{L}_{ij}^{(m)}}{m!}(c_h\xi_j^i)^{m+1}\right]|\mathbf{x}, \mathbf{y}\right)\\
&= (W^{i+1})^\mathsf{T}\mathbf{e}^{i+1} + \mathbb{E}\left(\frac{1}{c_h^2}\sum_{m=2}^\infty\frac{\mathcal{L}_{ij}^{(m)}}{m!}(c_h\xi_j^i)^{m+1}|\mathbf{x},\mathbf{y}\right).
\end{align*}
Taken together these suggest we can prove $\hat{B}^{i+1}\to W^{i+1}$ in the same way we prove consistency of the linear least squares estimator.

For this to work we must show the expectation of the Taylor series approximation (\ref{eq:np}) is well behaved. That is, we must show the expected remainder term of the expansion:
$$
\mathcal{E}^i_j(c_h) = \mathbb{E}\left[ \frac{1}{c_h^2}\sum_{m=2}^\infty\frac{\mathcal{L}_{ij}^{(m)}}{m!}(c_h\xi_j^i)^{m+1}|\mathbf{x},\mathbf{y}\right],
$$
is finite and goes to zero as $c_h\to 0$. This requires some additional assumptions on the problem. 

We make the following assumptions:
\begin{itemize}
    \item A1: the noise $\xi$ is subgaussian,
    \item A2: the loss function $\mathcal{L}(\mathbf{x},\mathbf{y})$ is analytic on $\mathcal{D}$,
    \item A3: the error matrices $\mathbf{\tilde{e}}^{i}(\mathbf{\tilde{e}}^{i})^\mathsf{T}$ are full rank, for $1 \le i \le N+1$, with probability 1,
    \item A4: the mean of the remainder and error terms is bounded:
    $$
    \mathbb{E}\left[\mathcal{E}^i(c_h)(\mathbf{\tilde{e}}^{i+1})^\mathsf{T}\right] < \infty,
    $$
    for $1 \le i \le N$.
\end{itemize}

Consider first convergence of the final layer feedback matrix, $B^{N+1}$. In the final layer it is true that $\mathbf{e}^{N+1} = \tilde{\mathbf{e}}^{N+1}$. 
\begin{theorem}
\label{thm:lseA}
Assume A1-4. For $\mathbf{g}_{FA}(\mathbf{h}^i, \tilde{\mathbf{e}}^{i+1}; B^{i+1}) = B^{i+1}\tilde{\mathbf{e}}^{i+1}$, then the least squares estimator
\begin{equation}
\label{eq:lse1A}
(\hat{B}^{N+1})^\mathsf{T} \coloneqq \hat{\lambda}^N (\mathbf{e}^{N+1})^\mathsf{T}\left(\mathbf{e}^{N+1}(\mathbf{e}^{N+1})^\mathsf{T}\right)^{-1},
\end{equation}
solves (\ref{eq:lsq}) and converges to the true feedback matrix, in the sense that:
$$
\lim_{c_h\to 0}\plim_{T\to\infty} \hat{B}^{N+1} = W^{N+1}.
$$

\begin{proof}
Let $\mathcal{L}_{ij}^{(m)} \coloneqq \frac{\partial^m \mathcal{L}}{\partial h_j^{im}}$. We first show that, under A1-2, the conditional expectation of the estimator (\ref{eq:est}) converges to the gradient $\mathcal{L}_{Nj}^{(1)}$ as $c_h \to 0$. For each $\hat{\lambda}^{N}_{j}$, by A2, we have the following series expanded around $\xi = 0$:
\begin{align*}
\hat{\lambda_j^N} &= \frac{1}{c_h^2}\sum_{m=1}^\infty\frac{\mathcal{L}_{ij}^{(m)}}{m!}(c_h\xi_j^{N})^{m+1}.
\end{align*}
Taking a conditional expectation gives:
\begin{align*}
\mathbb{E}(\hat{\lambda}_j^N|\mathbf{x},\mathbf{y}) =& (W^{N+1})^\mathsf{T}\mathbf{e}^{N+1} + \mathbb{E}\left[ \frac{1}{c_h^2}\sum_{m=2}^\infty\frac{\mathcal{L}_{Nj}^{(m)}}{m!}(c_h\xi_j^N)^{m+1}|\mathbf{x},\mathbf{y}\right].
\end{align*}
We must show the remainder term
$$
\mathcal{E}^N(c_h) = \mathbb{E}\left[ \frac{1}{c_h^2}\sum_{m=2}^\infty\frac{\mathcal{L}_{Nj}^{(m)}}{m!}(c_h\xi_j^N)^{m+1}|\mathbf{x},\mathbf{y}\right],
$$
goes to zero as $c_h \to 0$. This is true provided each moment $\mathbb{E}((\xi_j^N)^m|\mathbf{x},\mathbf{y})$ is sufficiently well-behaved. Using Jensen's inequality and the triangle inequality in the first line, we have that
\begin{align}
\notag\left|\mathcal{E}^N(c_h)\right| &\le \mathbb{E}\left[ \frac{1}{c_h^2}\sum_{m=2}^\infty\left|\frac{\mathcal{L}_{Nj}^{(m)}}{m!}\right||c_h\xi_j^N|^{m+1}|\mathbf{x},\mathbf{y}\right], \quad \forall (\mathbf{x}, \mathbf{y})\in\mathcal{D}\\
\notag\text{[monotone convergence]}\quad&=  \sum_{m=2}^\infty\left|\frac{\mathcal{L}_{Nj}^{(m)}}{m!}\right|(c_h)^{m-1}\mathbb{E}\left[|\xi_j^N|^{m+1}\right]\\
\notag\text{[subgaussian]}\quad&\le K\sum_{m=2}^\infty\left|\frac{\mathcal{L}_{Nj}^{(m)}}{m!}\right|(c_h)^{m-1}(\sqrt{m+1})^{m+1}\\
\label{eq:czero} \quad &= \mathcal{O}(c_h)\qquad \text{as $c_h\to 0$}.
\end{align}

With this in place, we have that the problem (\ref{eq:lsqA}) is close to a linear least squares problem, since
\begin{equation}
\label{eq:lambdaA}
\hat{\lambda}^N = (W^{N+1})^\mathsf{T}\mathbf{e}^{N+1} + \mathcal{E}^N(c_h) + \eta^N,
\end{equation}
with residual $\eta^{N} = \hat{\lambda}^{N} - \mathbb{E}(\hat{\lambda}^{N}|\mathbf{x},\mathbf{y})$. The residual satisfies 
\begin{align}
\notag \mathbb{E}\left(\mathbf{e}^{N+1} (\eta^{N})^\mathsf{T} \right) &= \notag\mathbb{E}(\mathbf{e}^{N+1} (\hat{\lambda}^N)^\mathsf{T} - \mathbf{e}^{N+1} \mathbb{E}((\hat{\lambda}^N)^\mathsf{T} | \mathbf{x},\mathbf{y})) \\
\notag &= \mathbb{E}\left(\mathbf{e}^{N+1} (\hat{\lambda}^N)^\mathsf{T} - \mathbb{E}\left(\mathbf{e}^{N+1} (\hat{\lambda}^N)^\mathsf{T} | \mathbf{x},\mathbf{y} \right) \right)\\
\label{eq:noresidA} &= 0.
\end{align}
This follows since $\mathbf{e}^{N+1}$ is defined in relation to the baseline loss, not the stochastic loss, meaning it is measurable with respect to $(\mathbf{x},\mathbf{y})$ and can be moved into the conditional expectation. 

From (\ref{eq:lambdaA}) and A3, we have that the least squares estimator (\ref{eq:lse1A}) satisfies
$$
(\hat{B}^{N+1})^\mathsf{T} = (W^{N+1})^\mathsf{T} + (\mathcal{E}^N(c_h) + \eta^N)(\mathbf{e}^{N+1})^\mathsf{T}(\mathbf{e}^{N+1}(\mathbf{e}^{N+1})^\mathsf{T})^{-1}.
$$
Thus, using the continuous mapping theorem
\begin{align*}
\plim_{T\to\infty}(\hat{B}^{N+1})^\mathsf{T} &= (W^{N+1})^\mathsf{T} + \left[\plim_{T\to\infty}\frac{1}{T}(\mathcal{E}^N(c_h) + \eta^N)(\mathbf{e}^{N+1})^\mathsf{T}\right]\left[\plim_{T\to\infty}\frac{1}{T}\mathbf{e}^{N+1}(\mathbf{e}^{N+1})^\mathsf{T}\right]^{-1}\\
\text{[WLLN]}\quad &= (W^{N+1})^\mathsf{T} + \mathbb{E}\left[(\mathcal{E}(c_h) + \eta^N)(\mathbf{e}^{N+1})^\mathsf{T}\right]\left[\mathbb{E}(\mathbf{e}^{N+1}(\mathbf{e}^{N+1})^\mathsf{T})\right]^{-1}\\
\text{[Eq. (\ref{eq:noresidA})]}\quad&= (W^{N+1})^\mathsf{T} + \mathbb{E}\left[\mathcal{E}(c_h)(\mathbf{e}^{N+1})^\mathsf{T}\right]\left[\mathbb{E}(\mathbf{e}^{N+1}(\mathbf{e}^{N+1})^\mathsf{T})\right]^{-1}\\
\text{[A4 and Eq. (\ref{eq:czero})]}\quad&= (W^{N+1})^\mathsf{T} + \mathcal{O}(c_h).
\end{align*}
Then we have:
$$
\lim_{c_h\to 0}\plim_{T\to\infty} \hat{B}^{N+1} = W^{N+1}.
$$
\end{proof}

\end{theorem}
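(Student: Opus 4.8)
The plan is to treat \eqref{eq:lsq} as an ordinary least-squares problem in which the deterministic quantity $\mathbf{e}^{N+1}$ plays the role of the regressor (in the final layer it equals $\tilde{\mathbf{e}}^{N+1}$) and the node-perturbation estimator $\hat{\lambda}^N$ of \eqref{eq:est} plays the role of the noisy response. First I would note that the first-order condition of the quadratic objective $\mathbb{E}\|B^\mathsf{T}\mathbf{e}^{N+1} - \hat{\lambda}^N\|_2^2$ is exactly the normal equation solved by the closed form \eqref{eq:lse1A}, so the estimator does solve \eqref{eq:lsq} and the only real content is consistency: showing that this projection recovers $(W^{N+1})^\mathsf{T}$ in the iterated limit $c_h\to 0$ after $T\to\infty$. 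The strategy mirrors the standard argument that OLS is consistent when the regressor is uncorrelated with the noise, with the extra wrinkle that here the response carries a small deterministic bias coming from the linearization \eqref{eq:np}.

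The core computation is to expand $\hat{\lambda}^N$. Using analyticity (A2), I would write $\tilde{\mathcal{L}}-\mathcal{L}$ as a convergent power series in the perturbation $c_h\xi$, so that $\hat{\lambda}^N$ becomes a series whose $m=1$ term is precisely the gradient $(W^{N+1})^\mathsf{T}\mathbf{e}^{N+1}$ and whose higher-order terms form a remainder $\mathcal{E}^N(c_h)$ once the conditional expectation over $\nu(\xi)$ is taken (for general, possibly asymmetric, subgaussian noise the odd moments need not vanish, which is why the remainder is only $\mathcal{O}(c_h)$ rather than $\mathcal{O}(c_h^2)$). The main obstacle is bounding this remainder and proving $\mathcal{E}^N(c_h)\to 0$: I would bound $|\mathcal{E}^N(c_h)|$ term by term, justify interchanging expectation and summation by monotone convergence, and then use subgaussianity (A1) to control the moments $\mathbb{E}|\xi_j^N|^{m+1}$ by a bound of the form $(\sqrt{m+1})^{m+1}$. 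Combined with the factorial decay of the Taylor coefficients $\mathcal{L}^{(m)}_{Nj}/m!$ supplied by analyticity, this yields a series in $c_h$ whose leading power is $c_h^1$, giving $\mathcal{E}^N(c_h)=\mathcal{O}(c_h)$.

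With the bias controlled, I would decompose $\hat{\lambda}^N = (W^{N+1})^\mathsf{T}\mathbf{e}^{N+1} + \mathcal{E}^N(c_h) + \eta^N$, where $\eta^N \coloneqq \hat{\lambda}^N - \mathbb{E}(\hat{\lambda}^N\mid\mathbf{x},\mathbf{y})$ is a conditionally mean-zero residual. The key structural fact to establish is that $\eta^N$ is uncorrelated with the regressor, i.e. $\mathbb{E}(\mathbf{e}^{N+1}(\eta^N)^\mathsf{T})=0$; this holds because $\mathbf{e}^{N+1}$ is defined from the deterministic baseline loss, hence is measurable with respect to $(\mathbf{x},\mathbf{y})$ and can be pulled inside the conditional expectation. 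Substituting the decomposition into \eqref{eq:lse1A} and normalizing the empirical sums by $1/T$, I would pass to the limit $T\to\infty$ using the weak law of large numbers together with the continuous mapping theorem, with A3 guaranteeing that the limiting second-moment matrix $\mathbb{E}(\mathbf{e}^{N+1}(\mathbf{e}^{N+1})^\mathsf{T})$ is invertible. The residual contribution vanishes by the orthogonality just established, and the bias contribution is controlled by A4 together with $\mathcal{E}^N(c_h)=\mathcal{O}(c_h)$, leaving $\plim_{T\to\infty}(\hat{B}^{N+1})^\mathsf{T} = (W^{N+1})^\mathsf{T} + \mathcal{O}(c_h)$. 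Sending $c_h\to 0$ finishes the proof. The genuinely delicate step is the remainder bound of the second paragraph; everything else is a careful but routine least-squares consistency argument.
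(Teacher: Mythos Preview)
Your proposal is correct and follows essentially the same approach as the paper's proof: Taylor-expand $\hat{\lambda}^N$ via analyticity, bound the remainder $\mathcal{E}^N(c_h)$ using monotone convergence and the subgaussian moment bound, decompose $\hat{\lambda}^N$ into signal plus bias plus mean-zero residual, use measurability of $\mathbf{e}^{N+1}$ to obtain orthogonality, and finish with WLLN and the continuous mapping theorem. Your remark that the remainder is only $\mathcal{O}(c_h)$ because odd moments of a general subgaussian variable need not vanish is a nice clarification that the paper leaves implicit.
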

We can use Theorem 1 to establish convergence over the rest of the layers of the network when the activation function is the identity.

\begin{theorem}
\label{thm:consistentA}
Assume A1-4. For $\mathbf{g}_{FA}(\mathbf{h}^i, \tilde{\mathbf{e}}^{i+1}; B^{i+1}) = B^{i+1}\tilde{\mathbf{e}}^{i+1}$ and $\sigma(x) = x$, the least squares estimator
\begin{equation}
\label{eq:lse2A}
(\hat{B}^{i})^\mathsf{T} \coloneqq \hat{\lambda}^{i-1} (\mathbf{\tilde{e}}^{i})^\mathsf{T}\left(\mathbf{\tilde{e}}^{i}(\mathbf{\tilde{e}}^{i})^\mathsf{T}\right)^{-1}\qquad 1 \le i \le N+1, 
\end{equation}
solves (\ref{eq:lsqA}) and converges to the true feedback matrix, in the sense that:
$$
\lim_{c_h\to 0}\plim_{T\to\infty} \hat{B}^{i} = W^{i}, \qquad 1 \le i \le N+1.
$$

\begin{proof}

Define 
$$
\tilde{W}^{i}(c) \coloneqq \plim_{T\to\infty} \hat{B}^{i},
$$
assuming this limit exists. From Theorem 1 the top layer estimate $\hat{B}^{N+1}$ converges in probability to $\tilde{W}^{N+1}(c)$. 

We can then use induction to establish that $\hat{B}^j$ in the remaining layers also converges in probability to $\tilde{W}^j(c)$. That is, assume that $\hat{B}^j$ converge in probability to $\tilde{W}^j(c)$ in higher layers $N + 1 \ge j > i$. Then we must establish that $\hat{B}^i$ also converges in probability. 

To proceed it is useful to also define
$$
\tilde{\tilde{\mathbf{e}}}(c)^i \coloneqq \begin{cases} \partial \mathcal{L}/\partial \hat{\mathbf{y}}\circ \sigma'(W^{i}\mathbf{h}^{i-1}), & i = N+1;\\
\left((\tilde{W}^{i+1}(c))^\mathsf{T} \tilde{\tilde{\mathbf{e}}}^{i+1}\right)\circ \sigma'(W^{i}\mathbf{h}^{i-1}), & 1 \le i \le N,
\end{cases}
$$
as the error signal backpropagated through the converged (but biased) weight matrices $\tilde{W}(c)$. Again it is true that $\tilde{\tilde{\mathbf{e}}}^{N+1} = \mathbf{e}^{N+1}$.

As in Theorem 1, the least squares estimator has the form:
$$
(\hat{B}^{i})^\mathsf{T} = \hat{\lambda}^{i-1} (\tilde{\mathbf{e}}^{i})^\mathsf{T}\left(\tilde{\mathbf{e}}^{i}(\tilde{\mathbf{e}}^{i})^\mathsf{T}\right)^{-1}.
$$
Thus, again by the continuous mapping theorem:
\begin{align*}
\plim_{T\to\infty} (\hat{B}^{i})^\mathsf{T} &= \left[\plim_{T\to\infty}\frac{1}{T}\hat{\lambda}^{i-1} (\tilde{\mathbf{e}}^{i})^\mathsf{T}\right]\left[\plim_{T\to\infty}\frac{1}{T}\tilde{\mathbf{e}}^{i}(\tilde{\mathbf{e}}^{i})^\mathsf{T}\right]^{-1}\\
    &= \left[\plim_{T\to\infty}\frac{1}{T}\hat{\lambda}^{i-1} (\mathbf{e}^{N+1})^\mathsf{T}\hat{B}^{N+1}\cdots \hat{B}^{i+1}\right]\left[\plim_{T\to\infty}\frac{1}{T}\tilde{\mathbf{e}}^{i}(\tilde{\mathbf{e}}^{i})^\mathsf{T}\right]^{-1}
%    &= \left[\plim_{T\to\infty}\frac{1}{T}\left[ (W^{n})^\mathsf{T}\mathbf{e}^{n} + \mathcal{E}^{n-1}(c_h) + \eta^{n-1} \right] (\tilde{\mathbf{e}}^{n})^\mathsf{T}\right]\left[\frac{1}{T}\tilde{\mathbf{e}}^{n}(\tilde{\mathbf{e}}^{n})^\mathsf{T}\right]^{-1}.
\end{align*}
In this case continuity again allows us to separate convergence of each term in the product:
\begin{align}
\label{eq:ctmA}  
\plim_{T\to\infty}\frac{1}{T}\hat{\lambda}^{i-1} (\mathbf{e}^{N+1})^\mathsf{T}\hat{B}^{N+1}\cdots \hat{B}^{i+1} &= \left[\plim_{T\to\infty}\frac{1}{T}\hat{\lambda}^{i-1} (\mathbf{e}^{N+1})^\mathsf{T}\right]\left[\plim_{T\to\infty}\hat{B}^{N+1}\right] \cdots \left[\plim_{T\to\infty}\hat{B}^{i+1}\right]\\
\notag &= \mathbb{E}(\hat{\lambda}^{i-1} (\mathbf{e}^{N+1})^\mathsf{T}) W^{N+1}(c)\cdots W^{i+1}(c),\\
\notag &= \mathbb{E}(\hat{\lambda}^{i-1} (\tilde{\tilde{\mathbf{e}}}^{i}(c))^\mathsf{T})
\end{align}
using the weak law of large numbers in the first term, and the induction assumption for the remaining terms. In the same way
$$
\plim_{T\to\infty} \frac{1}{T}\tilde{\mathbf{e}}^{i}(\tilde{\mathbf{e}}^{i})^\mathsf{T} = \mathbb{E}(\tilde{\tilde{\mathbf{e}}}^{i}(c)(\tilde{\tilde{\mathbf{e}}}^{i}(c))^\mathsf{T}).
$$
Note that the induction assumption also implies $\lim_{c\to 0}\tilde{\tilde{\mathbf{e}}}^i(c) = \mathbf{e}^i$. Thus, putting it together, by A3, A4 and the same reasoning as in Theorem \ref{thm:lseA} we have the result:
\begin{align*}
\lim_{c_h \to 0} \plim_{T\to \infty} (\hat{B}^i)^\mathsf{T} &= \lim_{c\to 0} \left[(W^i)^\mathsf{T}\mathbb{E}(\mathbf{e}^{i}(\tilde{\tilde{\mathbf{e}}}^{i}(c))^\mathsf{T}) + \mathbb{E}(\mathcal{E}^{i-1}(c)(\tilde{\tilde{\mathbf{e}}}^i(c))^\mathsf{T}\right]\left[\mathbb{E}(\tilde{\tilde{\mathbf{e}}}^{i}(c)(\tilde{\tilde{\mathbf{e}}}^{i}(c))^\mathsf{T})\right]^{-1}\\
&= (W^i)^\mathsf{T}.
\end{align*}

\end{proof}
\end{theorem}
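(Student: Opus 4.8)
The plan is to prove the deep linear case by downward induction on the layer index, using Theorem~\ref{thm:lseA} as the base case at layer $N+1$ and propagating convergence one layer at a time. First I would introduce the (possibly biased) probability limit $\tilde{W}^i(c) \coloneqq \plim_{T\to\infty}\hat{B}^i$ at a fixed noise level $c$, and adopt the inductive hypothesis that for every higher layer $j$ with $i < j \le N+1$ this limit exists and satisfies $\lim_{c\to 0}\tilde{W}^j(c) = W^j$. Theorem~\ref{thm:lseA} supplies exactly this statement for $j = N+1$, so the base case is free.

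The key structural simplification comes from $\sigma(x) = x$, so that $\sigma' \equiv 1$ and the backpropagated synthetic error telescopes into a product of higher-layer estimates: since $\tilde{\mathbf{e}}^{N+1}=\mathbf{e}^{N+1}$, we get $(\tilde{\mathbf{e}}^i)^\mathsf{T} = (\mathbf{e}^{N+1})^\mathsf{T}\hat{B}^{N+1}\cdots\hat{B}^{i+1}$. Substituting this into the estimator (\ref{eq:lse2A}) writes $(\hat{B}^i)^\mathsf{T}$ as a ratio of two empirical averages, each a product of $\hat{\lambda}^{i-1}$ or $\mathbf{e}^{N+1}$ with a string of the $\hat{B}^j$. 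I would then apply the continuous mapping theorem to factor the probability limit of this ratio into the probability limits of its factors, invoking the weak law of large numbers to turn the data averages $\frac{1}{T}\sum$ into expectations and the induction hypothesis to replace each $\hat{B}^j$ by $\tilde{W}^j(c)$. This identifies numerator and denominator with expectations built from $\tilde{\tilde{\mathbf{e}}}^i(c)$, the error backpropagated through the converged-but-biased matrices, yielding $\plim_{T\to\infty}(\hat{B}^i)^\mathsf{T} = \mathbb{E}\big(\hat{\lambda}^{i-1}(\tilde{\tilde{\mathbf{e}}}^i(c))^\mathsf{T}\big)\big[\mathbb{E}(\tilde{\tilde{\mathbf{e}}}^i(c)(\tilde{\tilde{\mathbf{e}}}^i(c))^\mathsf{T})\big]^{-1}$, where A3 guarantees the inverse exists.

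To close the induction I would send $c\to 0$. Reusing the conditional-expectation expansion behind Theorem~\ref{thm:lseA}, I write $\mathbb{E}(\hat{\lambda}^{i-1}\mid\mathbf{x},\mathbf{y}) = (W^i)^\mathsf{T}\mathbf{e}^i + \mathcal{E}^{i-1}(c)$, with the remainder controlled through the subgaussian moment bound (A1) and analyticity (A2) so that $\mathcal{E}^{i-1}(c) = \mathcal{O}(c)$. The induction hypothesis together with continuity gives $\tilde{\tilde{\mathbf{e}}}^i(c)\to\mathbf{e}^i$, so the numerator tends to $(W^i)^\mathsf{T}\mathbb{E}(\mathbf{e}^i(\mathbf{e}^i)^\mathsf{T})$ up to a cross term that vanishes by A4, while the denominator tends to $\mathbb{E}(\mathbf{e}^i(\mathbf{e}^i)^\mathsf{T})$; these cancel to leave $(W^i)^\mathsf{T}$, completing the inductive step.

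The main obstacle I anticipate is the bookkeeping of the two nested limits. The estimator at layer $i$ depends on the estimates at every higher layer, so I must carry both kinds of convergence through the induction at once: in-probability convergence as $T\to\infty$ at fixed $c$, and debiasing as $c\to 0$. The delicate point is justifying that backpropagating through the fixed-$c$ limits $\tilde{W}^j(c)$ yields an error $\tilde{\tilde{\mathbf{e}}}^i(c)$ that depends continuously on $c$ and reduces to the true $\mathbf{e}^i$ at $c=0$; this continuity is precisely what makes the remainder–error cross term controllable via A4 and lets the $\mathbb{E}(\mathbf{e}^i(\mathbf{e}^i)^\mathsf{T})$ factors cancel cleanly in the final step.
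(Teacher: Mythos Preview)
Your proposal is correct and follows essentially the same route as the paper: downward induction with Theorem~\ref{thm:lseA} as the base case, exploiting linearity to telescope $\tilde{\mathbf{e}}^i$ into a product $(\mathbf{e}^{N+1})^\mathsf{T}\hat{B}^{N+1}\cdots\hat{B}^{i+1}$, applying the continuous mapping theorem plus WLLN to factor the probability limit, introducing the biased backpropagated error $\tilde{\tilde{\mathbf{e}}}^i(c)$ through the converged matrices $\tilde{W}^j(c)$, and then sending $c\to 0$ using the Taylor-remainder control from Theorem~\ref{thm:lseA} together with A3--A4. Your identification of the nested-limits bookkeeping as the delicate point matches exactly where the paper's argument does its work.
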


\begin{corollary}
Assume A1-4. For $\mathbf{g}_{DFA}(\mathbf{h}^i, \tilde{\mathbf{e}}^{N+1}; B^{i+1}) = B^{i+1}\tilde{\mathbf{e}}^{N+1}$ and $\sigma(x) = x$, the least squares estimator
\begin{equation}
\label{eq:lse2CA}
(\hat{B}^{i})^\mathsf{T} \coloneqq \hat{\lambda}^{i-1} (\mathbf{\tilde{e}}^{N+1})^\mathsf{T}\left(\mathbf{\tilde{e}}^{N+1}(\mathbf{\tilde{e}}^{N+1})^\mathsf{T}\right)^{-1}\qquad 1 \le i \le N+1, 
\end{equation}
solves (\ref{eq:lsq}) and converges to the true feedback matrix, in the sense that:
$$
\lim_{c_h\to 0}\plim_{T\to\infty} \hat{B}^{i} = \prod_{j=N+1}^{i} W^{j}, \qquad 1 \le i \le N+1.
$$
\end{corollary}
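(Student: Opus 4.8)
The plan is to observe that the DFA setting is actually \emph{simpler} than Theorem~\ref{thm:consistentA}, because the regressor used for every layer is the top-layer error $\tilde{\mathbf{e}}^{N+1}$, which by construction equals the true error $\mathbf{e}^{N+1}$ and therefore does not depend on any of the learned feedback matrices $\hat{B}^{j}$. Consequently the layer-by-layer induction of Theorem~\ref{thm:consistentA} collapses: each estimator $(\hat{B}^i)^\mathsf{T}$ is a clean linear least-squares regression of $\hat{\lambda}^{i-1}$ onto the fixed quantity $\mathbf{e}^{N+1}$, and I would establish each one by a direct reuse of the argument in Theorem~\ref{thm:lseA}, with no auxiliary biased-error process $\tilde{\tilde{\mathbf{e}}}(c)$ required.

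First I would identify the regression target. Since $\sigma(x)=x$ gives $\sigma'\equiv 1$, the backpropagation recursion reduces to $\mathbf{e}^i = (W^{i+1})^\mathsf{T}\mathbf{e}^{i+1}$, so that the true loss gradient at layer $i-1$ is $\lambda^{i-1} = (W^i)^\mathsf{T}\cdots(W^{N+1})^\mathsf{T}\mathbf{e}^{N+1} = \left(\prod_{j=N+1}^{i}W^j\right)^\mathsf{T}\mathbf{e}^{N+1}$, where the transpose identity $(W^{N+1}\cdots W^i)^\mathsf{T} = (W^i)^\mathsf{T}\cdots(W^{N+1})^\mathsf{T}$ fixes both the ordering and the transpose convention in the statement. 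This shows that the population least-squares coefficient matching $B\,\mathbf{e}^{N+1}$ to $\lambda^{i-1}$ is exactly $\left(\prod_{j=N+1}^{i}W^j\right)^\mathsf{T}$, i.e. the target claimed for $(\hat{B}^i)^\mathsf{T}$.

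Next I would carry over the three ingredients from Theorem~\ref{thm:lseA}, with the gradient index shifted to $i-1$ and the regressor held fixed at $\mathbf{e}^{N+1}$. Writing $\hat{\lambda}^{i-1} = \lambda^{i-1} + \mathcal{E}^{i-1}(c_h) + \eta^{i-1}$, assumptions A1--A2 give the Taylor-remainder bound $\mathcal{E}^{i-1}(c_h) = \mathcal{O}(c_h)$ exactly as in Eq.~(\ref{eq:czero}); the residual $\eta^{i-1}$ is orthogonal to $\mathbf{e}^{N+1}$ by the same measurability argument as Eq.~(\ref{eq:noresidA}), since $\mathbf{e}^{N+1}$ is a function of $(\mathbf{x},\mathbf{y})$ alone and may be pulled inside the conditional expectation; and A3 guarantees $\mathbf{e}^{N+1}(\mathbf{e}^{N+1})^\mathsf{T}$ is invertible. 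Applying the continuous mapping theorem and the weak law of large numbers then yields
$$
\plim_{T\to\infty}(\hat{B}^i)^\mathsf{T} = \left(\prod_{j=N+1}^{i}W^j\right)^\mathsf{T} + \mathbb{E}\!\left[\mathcal{E}^{i-1}(c_h)(\mathbf{e}^{N+1})^\mathsf{T}\right]\left[\mathbb{E}\!\left(\mathbf{e}^{N+1}(\mathbf{e}^{N+1})^\mathsf{T}\right)\right]^{-1},
$$
and letting $c_h\to 0$ kills the bias term by A4 together with the $\mathcal{O}(c_h)$ estimate, giving the claimed limit for each $1\le i\le N+1$.

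The main thing to get right is bookkeeping rather than any new analytic obstacle: I expect no difficulty in the probabilistic limit, which is verbatim Theorem~\ref{thm:lseA}, and the only genuinely new step is the chain-rule identity for $\lambda^{i-1}$ in the linear network. There I would be careful that the product $\prod_{j=N+1}^{i}W^j$ is read in decreasing order of $j$ and that the transpose lands on the correct side, so that it is $(\hat{B}^i)^\mathsf{T}$, and not $\hat{B}^i$, that converges to the stated product. The conceptual payoff worth stating explicitly is that the DFA regressor $\tilde{\mathbf{e}}^{N+1}=\mathbf{e}^{N+1}$ is already the exact error and is shared, unchanged, across all layers, which is precisely what removes the need for the inductive propagation of biased estimates that drove the proof of Theorem~\ref{thm:consistentA}.
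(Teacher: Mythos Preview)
Your proposal is correct and follows essentially the same approach as the paper's own proof: both note that in the linear DFA setting the regressor $\tilde{\mathbf{e}}^{N+1}=\mathbf{e}^{N+1}$ is fixed and independent of any learned feedback matrices, decompose $\hat{\lambda}^{i-1}$ into the true gradient $(W^{i}\cdots W^{N+1})^\mathsf{T}\mathbf{e}^{N+1}$ plus remainder plus noise, and then rerun the Theorem~\ref{thm:lseA} argument verbatim for each layer. Your explicit remark that this collapses the induction of Theorem~\ref{thm:consistentA} is a useful clarification the paper leaves implicit.
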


\begin{proof}
For a deep linear network notice that the node perturbation estimator can be expressed as:
\begin{equation}
\hat{\lambda}^i = (W^{i+1}\cdots W^{N+1})^\mathsf{T}\mathbf{e}^{N+1} + \mathcal{E}^i(c_h) + \eta^i,
\end{equation}
where the first term represents the true gradient, given by the simple linear backpropagation, the second and third terms are the remainder and a noise term, as in Theorem 1. Define 
$$
V^i\coloneqq \prod_{j=N+1}^{i} W_j.
$$
Then following the same reasoning as the proof of Theorem 1, we have:
\begin{align*}
\plim_{T\to\infty}(\hat{B}^{i+1})^\mathsf{T} &= (V^{i+1})^\mathsf{T} + \left[\plim_{T\to\infty}\frac{1}{T}(\mathcal{E}^i(c_h) + \eta^i)(\mathbf{e}^{N+1})^\mathsf{T}\right]\left[\plim_{T\to\infty}\frac{1}{T}\mathbf{e}^{N+1}(\mathbf{e}^{N+1})^\mathsf{T}\right]^{-1}\\
\quad &= (V^{i+1})^\mathsf{T} + \mathbb{E}\left[(\mathcal{E}(c_h) + \eta^i)(\mathbf{e}^{N+1})^\mathsf{T}\right]\left[\mathbb{E}(\mathbf{e}^{N+1}(\mathbf{e}^{N+1})^\mathsf{T})\right]^{-1}\\
\quad&= (V^{i+1})^\mathsf{T} + \mathbb{E}\left[\mathcal{E}(c_h)(\mathbf{e}^{N+1})^\mathsf{T}\right]\left[\mathbb{E}(\mathbf{e}^{N+1}(\mathbf{e}^{N+1})^\mathsf{T})\right]^{-1}\\
\quad&= (V^{i+1})^\mathsf{T} + \mathcal{O}(c_h).
\end{align*}
Then we have:
$$
\lim_{c_h\to 0}\plim_{T\to\infty} \hat{B}^{i+1} = V^{i+1}.
$$
\end{proof}

\subsection{Discussion of assumptions}

It is worth making the following points on each of the assumptions:
\begin{itemize}
 \item A1. In the paper we assume $\xi$ is Gaussian. Here we prove the more general result of convergence for any subgaussian random variable.
 \item A2. In practice this may be a fairly restrictive assumption, since it precludes using relu non-linearities. Other common choices, such as hyperbolic tangent and sigmoid non-linearities with an analytic cost function do satisfy this assumption, however. 
 \item A3. It is hard to establish general conditions under which $\tilde{\mathbf{e}}^i(\tilde{\mathbf{e}}^i)^\mathsf{T}$ will be full rank. While it may be a reasonable assumption in some cases.
\end{itemize}

Extensions of Theorem 2 to a non-linear network may be possible. However, the method of proof used here is not immediately applicable because the continuous mapping theorem can not be applied in such a straightforward fashion as in Equation (\ref{eq:ctmA}). In the non-linear case the resulting sums over all observations are neither independent or identically distributed, which makes applying any law of large numbers complicated.

\section{Validation with fixed $W$}

\begin{figure*}
\centering
\includegraphics[width=1\textwidth]{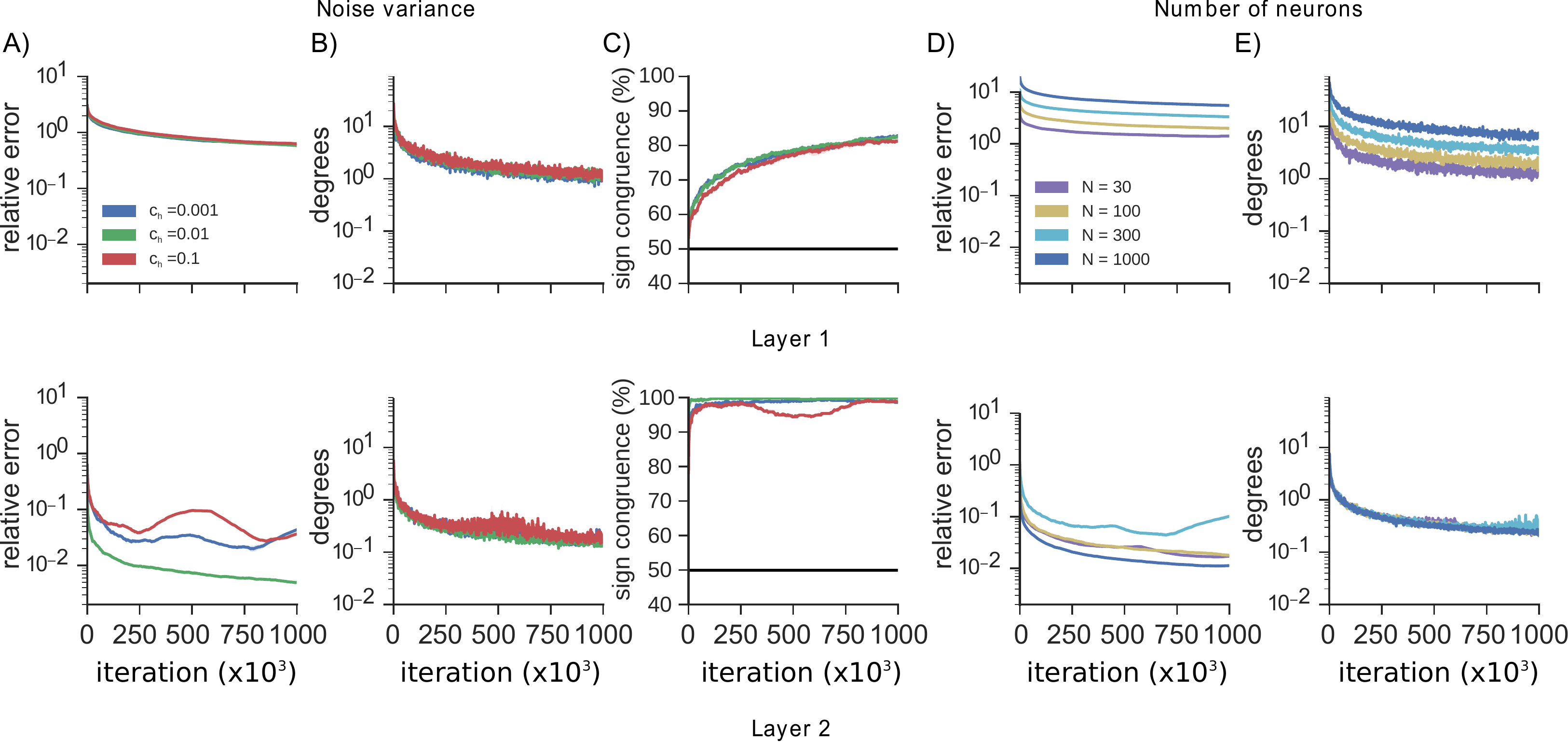}
\caption{Convergence of node perturbation method in a two hidden layer neural network (784-50-20-10) with MSE loss, for varying noise levels $c$. Node perturbation is used to estimate feedback matrices that provide gradient estimates for fixed $W$. (A) Relative error ($\|W^i-B^i\|_F/\|W^i\|_F$) for each layer. (B) Angle between true gradient and synthetic gradient estimate at each layer. (C) Percentage of signs in $W^i$ and $B^i$ that are in agreement. (D) Relative error when number of neurons is varied (784-N-50-10). (E) Angle between true gradient and synthetic gradient estimate at each layer.}
\label{fig:wfixed}
\end{figure*}

We demonstrate the method's convergence in a small non-linear network solving MNIST for different noise levels, $c_h$, and layer widths (Figure \ref{fig:wfixed}). As basic validation of the method, in this experiment the feedback matrices are updated while the feedforward weights $W^i$ are held fixed. We should expect the feedback matrices $B^i$ to converge to the feedforward matrices $W^i$. Here different noise variance does results equally accurate estimators (Figure \ref{fig:wfixed}A). The estimator correctly estimates the true feedback matrix $W^2$ to a relative error of 0.8\%. The convergence is layer dependent, with the second hidden layer matrix, $W^2$, being accurately estimated, and the convergence of the first hidden layer matrix, $W^1$, being less accurately estimated. Despite this, the angles between the estimated gradient and the true gradient (proportional to $\mathbf{e}^\mathsf{T}WB^\mathsf{T}\tilde{\mathbf{e}}$)
are very close to zero for both layers (Figure \ref{fig:wfixed}B) (less than 90 degrees corresponds to a descent direction). Thus the estimated gradients strongly align with true gradients in both layers. Recent studies have shown that sign congruence of the feedforward and feedback matrices is all that is required to achieve good performance \cite{Liao2015,Xiao2018}. Here significant sign congruence is achieved in both layers (Figure \ref{fig:wfixed}C), despite the matrices themselves being quite different in the first layer. The number of neurons has an effect on both the relative error in each layer and the extent of alignment between true and synthetic gradient (Figure \ref{fig:wfixed}D,E). The method provides useful error signals for a variety of sized networks, and can provide useful error information to layers through a deep network.

\section{Experiment details}

Details of each task and parameters are provided here. All code is implemented in TensorFlow.

\subsection{Figure 2}
Networks are 784-50-20-10 with an MSE loss function. A sigmoid non-linearity is used. A batch size of 32 is used. $B$ is updated using synthetic gradient updates with learning rate $\eta = 0.0005$, $W$ is updated with learning rate $0.0004$, standard deviation of noise is 0.01. Same step size is used for feedback alignment, backpropagation and node perturbation. An initial warm-up period of 1000 iterations is used, in which the feedforward weights are frozen but the feedback weights are adjusted.

\subsection{Figure 3}
Network has dimensions 784-200-2-200-784. Activation functions are, in order: tanh, identity, tanh, relu. MNIST input data with MSE reconstruction loss is used. A batch size of 32 was used. In this case stochastic gradient descent was used to update $B$. Values for $W$ step size, noise variance and $B$ step size were found by random hyperparameter search for each method. The denoising autoencoder used Gaussian noise with zero mean and standard deviation $\sigma = 0.3$ added to the input training data.

\subsection{Figure 4}
Networks are 784-50-20-10 (noise variance) or 784-N-50-10 (number of neurons) solving MNIST with an MSE loss function. A sigmoid non-linearity is used. A batch size of 32 is used. Here $W$ is fixed, and $B$ is updated according to an online ridge regression least-squares solution. This was used becase it converges faster than the gradient-descent based optimization used for learning $B$ throughout the rest of the text, so is a better test of consistency. A regularization parameter of $\gamma = 0.1$ was used for the ridge regression. That is, for each update, $B^i$ was set to the exact solution of the following:
\begin{equation}
\hat{B}^{i+1} = \argmin_{B} \mathbb{E}\left\| \mathbf{g}(\mathbf{h}^i, \tilde{\mathbf{e}}^{i+1}; B) - \hat{\lambda^i} \right\|_2^2 + \gamma \|B\|_F^2.
\end{equation}

\subsection{CNN architecture and implementation}

Code and CNN architecture are based on the direct feedback alignment implementation of \citet{Crafton2019}. Specifically, for both CIFAR10 and CIFAR100, the CNN has the architecture Conv(3x3, 1x1, 32), MaxPool(3x3, 2x2), Conv(5x5, 1x1, 128), MaxPool(3x3, 2x2), Conv(5x5, 1x1, 256), MaxPool(3x3, 2x2), FC 2048, FC 2048, Softmax(10). Hyperparameters (learning rate, feedback learning rate, and perturbation noise level) were found through random search. All other parameters are the same as \citet{Crafton2019}. In particular, ADAM optimizer was used, and dropout with probability 0.5 was used.

\subsection{Noise ablation study}

The methods listed in Table 2 are implemented as follows. For the autoencoding task: Through hyperparameter search, a noise standard deviation of $c_h^* = 0.02$ was found to give optimal performance for our method. For BP(SGD), BP(ADAM), FA, the `noise' results in the Table are obtained by adding zero-mean Gaussian noise to the activations with the same standard deviation, $c_h^*$. For the DAE, a noise standard deviation of $c_i = 0.3$ was added to the inputs of the network. Implementation of the synthetic gradient method here takes the same form as our method: $g(\mathbf{h}, \mathbf{e}, \mathbf{y};B) = B\mathbf{e}$ (this contrasts with the form used in \cite{Jaderberg2016}: $g(\mathbf{h}, \mathbf{e}, \mathbf{y};B,c) = B^\mathsf{T}\mathbf{h}+c$). But the matrices $B$ are trained by providing true gradients $\lambda$, instead of noisy estimators based on node perturbation. This is not biologically plausible, but provides a useful baseline to determine the source of good performance. The other co-adapting baseline we investigate is the `matching' rule (similar to \citep{Wilson2019,Rombouts2015,Martinolli2018}): the updates to $B$ match those of $W$, and weight decay is used to drive the feedforward and feedback matrices to be similar. 

For the CIFAR10 results, our hyperparameter search identified a noise standard deviation of $c_h = 0.067$ to be optimal. This was added to the activations . The synthetic gradients took the same form as above. 
%\subsection{Figure 4}
%Data is generated as a long continuous input stream $\mathbf{x}$ and expected output stream $y$. One epoch was defined as 50,000 time steps. BPTT was unrolled 7 time steps, a batch size of 20 was used. 50 hidden units are used, with a tanh activation function, and MSE loss function is used. Same step size was used for node perturbation, feedback alignment and backpropagation:  $\eta = 5\times 10^{-6}$. In this case node perturbation performance was more stable for stochastic gradient updates to $B$, instead of the exact least squares solution. The step size of updates to $B$ was $\nu = 5\times 10^{-5}$, and noise variance was $c_h = 0.5$. Values for noise variance and $\nu$ were found by random hyperparameter search.

%\subsection{Figure 4}
%Here BPTT was unrolled for 10 time steps. Each trial consisted of 4s of simulated time. A batch size of 20 was used. 50 hidden units are used, with a tanh activation function, and MSE loss function is used. Same step size was used for node perturbation, feedback alignment and backpropagation: $\eta = 0.001$. In this case node perturbation performance was more stable for stochastic gradient updates to $B$, instead of the exact least squares solution. The step size of updates to $B$ was $\nu = 0.005$, and noise variance was $c_h = 0.01$. Values for noise variance and $\nu$ were found by random hyperparameter search. Equations of motion and parameters are from \citep{Anderson1988} with step-size $\tau = 0.04$s.

\end{document}